\newfont{\bb}{msbm10 at 12pt}
\newcommand{\dd}{{\rm d}}
\newcommand{\bd}{\begin{definition}}                
\newcommand{\ed}{\end{definition}}                  
\newcommand{\bc}{\begin{corollary}}                 
\newcommand{\ec}{\end{corollary}}                   
\newcommand{\bl}{\begin{lemma}}                     
\newcommand{\el}{\end{lemma}}                       
\newcommand{\bp}{\begin{proposition}}            
\newcommand{\ep}{\end{proposition}}                
\newcommand{\bere}{\begin{remark}}                  
\newcommand{\ere}{\end{remark}}                     
\newcommand{\bt}{\begin{theorem}}
\newcommand{\et}{\end{theorem}}
\newcommand{\be}{\begin{equation}}
\newcommand{\ee}{\end{equation}}
\newcommand{\bit}{\begin{itemize}}
\newcommand{\eit}{\end{itemize}}
\newtheorem{theorem}{Theorem}[section]
\newtheorem{corollary}[theorem]{Corollary}
\newtheorem{lemma}[theorem]{Lemma}
\newtheorem{proposition}[theorem]{Proposition}
\theoremstyle{definition}
\newtheorem{definition}[theorem]{Definition}
\theoremstyle{remark}
\newtheorem{remark}[theorem]{Remark}
\begin{document}
%

\title{Limit curve theorems in Lorentzian geometry}

\author{E. Minguzzi \footnote{Dipartimento di Matematica Applicata, Universit\`a degli Studi di Firenze,  Via
S. Marta 3,  I-50139 Firenze, Italy. E-mail:
ettore.minguzzi@unifi.it}}

\date{}
\maketitle

\begin{abstract}
The subject of limit curve theorems in Lorentzian geometry is
reviewed. A general limit curve theorem is formulated which includes
the case of converging curves with endpoints and the case in which
the limit points assigned since the beginning are one, two or at
most denumerable. Some applications are considered. It is proved
that in chronological spacetimes, strong causality is either
everywhere verified or everywhere violated on  maximizing lightlike
segments with open domain.  As a consequence, if in a chronological
spacetime two distinct lightlike lines intersect each other then
strong causality holds at their points. Finally, it is proved that
two distinct components of the chronology violating set have
disjoint closures or there is a lightlike line passing through each
point of the intersection of the corresponding boundaries.

\end{abstract}


\section{Introduction}

The limit curve theorems are surely one of the most fundamental
tools of Lorentzian geometry. Their importance  is  certainly
superior to that of analogous results in Riemannian geometry because
in Lorentzian manifolds the curves may have a causal character, and
hence it is particularly important to establish whether two points
can be connected  by a causal, a timelike or a lightlike curve.

The limit curve methods are so powerful and their range of
applicability is so wide that often the application of a limit curve
argument comes as the very first step in order to reach a desired
result. In some sense the application of a limit curve theorem may
be called a ``brute force method'', a method which sometimes can be
replaced by more elegant arguments but whose effectiveness can
hardly be denied.

The proofs of this kind of results is often lengthy, and for this
reason it is important to have them stated in a sufficiently general
and informative way. Otherwise, the risk for the researcher is that
of being forced to rebuild a slightly more general statement, all
over again, any time a modification or an improvement is needed.
Unfortunately, in my opinion, the limit curve theorems have not been
stated with sufficient generality and as a researcher I have indeed
experienced the above problem. The basic results so far available on
limit curves are scattered across different books and research
articles, with versions that rely on different conventions.
Moreover, and most importantly, the statements of those results do
not take full advantage of the powerful methods used in the proofs
so that there is in fact enough room for interesting improvements.

The aim of this work is to comment and make some order on the
results that have appeared in the literature, and to produce a
version which should be able to capture most of the information that
a limit curve theorem should give. In this way my hope is to make a
service to those researchers who use limit curve theorems in
Lorentzian geometry, and who want to rely on a general
 result with a detailed proof.

%
%

The changes experienced by the limit curve theorems in the last
decades are  worth knowing. I give a brief account which may help to
understand in which sense the version given in this work is stronger
or includes the previous formulations. I will translate the
different versions in the  notations of this work. Some
technicalities will clarify in what follows. Note that the curves
considered are always future-directed so that this adjective is
omitted throughout the work.

A first version of limit curve theorem is theorem 6.2.1 of Hawking
and Ellis \cite{hawking73}

\begin{quote}
{\em Let $\gamma_n$ be an infinite sequence of continuous causal
curves which are (past, future) inextendible. If $x$ is a limit
point of $\gamma_n$, then through $x$ there is continuous causal
curve which is (resp. past, future) inextendible and which is a
limit curve of the $\gamma_n$.}
\end{quote}

This formulation has some weak points that I am going to comment.
\begin{itemize}
\item[(i)] It uses a weak version of ``limit curve'' definition.
\item[(ii)] The convergence obtained does not allow to apply
results  on the upper semi-continuity of the length functional
unless strong causality is added.
\item[(iii)] It does not include the case of curves with both
endpoints, nor it includes the case in which the limit event ($x$
above) is not unique.

\end{itemize}

The first weak point comes from the particular definition of limit
curve used in \cite[Sect. 6.2]{hawking73}. They define $\sigma$ to
be a limit curve of $\sigma_n$ if there is a (distinguishing)
subsequence $\sigma_m$ such that for any $x\in \sigma$, every
neighborhood of $x$ intersects {\em an infinite number}  of
$\sigma_m$ ($x$ is distinguished by $\sigma_m$).

In Beem et al. \cite[Def. 3.28]{beem96} a different definition is
given where {\em an infinite number} is replaced by {\em  all but a
finite number}. There are simple examples of limit curves according
to the definition of \cite{hawking73} which are not limit curves
according to \cite{beem96}. Thus the limit curve theorem by Beem et
al. is stronger than that by Hawking and Ellis. Also, the theorem
\cite[Prop. 3.34]{beem96} on the almost equivalence between the
limit curve convergence and the $C^0$ convergence in strongly causal
spacetimes does not hold with the definition of limit curve given in
\cite{hawking73}.

Although the version given in Beem et al. \cite[Prop. 3.31]{beem96}
solved the problem (i), the formulation was pretty much similar to
that by Hawking and Ellis. In particular in applications one often
has to deal with limit curves situations in which one would like to
apply the upper semi-continuity for the length functional. It was
known, see Penrose's book \cite{penrose72}, that though the limit
curve convergence was not enough in order to guarantee the upper
semi-continuity of the length functional, at least under strong
causality  the $C^{0}$ convergence was indeed sufficient. Moreover,
it was known that in strongly causal spacetimes the $C^{0}$
convergence is actually almost equivalent to the limit curve
convergence in a sense clarified by Beem et al. in \cite[Prop.
3.34]{beem96}. In order to use the upper semi-continuity of the
length functional in a limit curve theorem application one had then
to assume the strong causality of the spacetime, pass through the
$C^{0}$ convergence of the sequence, and apply the upper
semi-continuity of length with respect to $C^0$ convergence as
proved by Penrose \cite{penrose72} (Beem et al. \cite[Remark
3.35]{beem96} refer to it and to Busemann \cite{busemann67}). It was
certainly a quite involved chain of implications, and the assumption
of strong causality was a serious drawback.

Nevertheless, the proof given by Beem et al. \cite[Prop.
3.31]{beem96} contained an important improvement. By using the
Arzela's theorem, in a way analogous to what was done by Tonelli in
the direct method of the calculus of variations \cite{buttazzo98},
they were able to show that the limiting sequence parametrized with
respect to the arc length of an auxiliary complete Riemannian
metric, converges {\em uniformly} on compact subsets to a suitable
parametrized limit curve. Galloway \cite{galloway86b} noted that the
uniform convergence on compact subsets was enough in order to
guarantee the upper semi-continuity of the length functional, at
least for curves restricted to a compact domain. This observation
was of fundamental importance because from that moment on one could
apply the limit curve theorem and the upper semi-continuity of the
length functional with no need to assume additional causality
requirements. In particular, the existence arguments for lines or
rays, being based on limit maximizing sequences, strongly benefited
from this observation.

From Beem et al. proof of the limit curve theorem, to Galloway's
observation, the technique of introducing an auxiliary complete
Riemannian metric $h$ so as to parametrize the curves with respect
to $h$-length became quite standard. The case in which the sequence
is made of curves with endpoints, converging or diverging, was
somewhat left aside and, though there were some important results in
this direction (see \cite[Theorem 8.13]{beem96}, \cite[Lemma
1]{eschenburg92}), they did not appear as a single body of research
together with the results on inextendible curves. The aim of this
work is to formulate the limit curve theorem in a way sufficiently
general to serve as a solid reference for future applications. In
particular it will include the case of converging curves with
endpoints.

We refer the reader to \cite{minguzzi06c} for most of the
conventions used in this work. In particular, we denote with $(M,g)$
a $C^{r}$ spacetime (connected, time-oriented Lorentzian manifold),
$r\in \{2, \dots, \infty\}$ of arbitrary dimension $n\geq 2$ and
signature $(-,+,\dots,+)$.  Subsequences of a sequence of curves
$\sigma_n$ are denoted with the same letter but changing the index.
Thus we can say that $\sigma_k$ is a subsequence of $\sigma_n$.

In some places in order to save space and include in one single
statement many different cases,  the generic  closed interval of the
real line is denoted $[a,b]$, where $a$ can take the value $-\infty$
and $b$ can take the value $+\infty$ (thus $[0,+\infty]$ stands for
$[0,+\infty)$). At the beginning of every lemma, theorem or
definition it is clearly pointed out if this convention applies.
Otherwise, $[a,b]$ denotes the usual compact interval, while the
letter $I$ denotes the generic closed interval of the real line.

\section{Some preliminary results}

Recall that a continuous curve $\gamma: I \to M$,  is {\em causal}
if for every convex set $U$ and $t_1,t_2 \in I$, $t_1<t_2$, with
$\gamma([t_1,t_2]) \subset U$, it is $\gamma(t_1)<_U \gamma(t_2)$
(see \cite{hawking73,minguzzi06c}). A continuous causal curve can be
shown to satisfy a local Lipschitz condition \cite[Eq.
3.14]{beem96}, and hence to be almost everywhere differentiable. The
same Lipschitz condition implies, in a suitable coordinate chart,
the boundness of velocity at those points where it is defined. Note
that the causality condition implies that there can't be $t_1,t_2
\in I$, $t_1<t_2$, such that $\gamma([t_1,t_2])=p \in M$, as it
could be for a generic continuous curve.

The Lorentzian length $l(\gamma)$ of a continuous causal curve is
defined as the greatest lower bound of the lengths of the
interpolating causal geodesics \cite{penrose72}. Because of the
almost everywhere differentiability, and of the local Lipschitz
condition, this length can be calculated with the usual integral
$l(\gamma)=\int_I \sqrt{-g(\dot{x}, \dot{x})} \, \dd t$.

Introduced on $M$ a Riemannian metric $h$, the Riemannian length
$l_0(\gamma)$ of a continuous causal curve  $\gamma: I \to M$ is
defined, as usual, as the lower upper bound of the $h$-lengths of
the interpolating $h$-geodesics. Due to the almost everywhere
differentiability, and to the local Lipschitz condition, this length
can be calculated with the usual integral $l(\gamma)=\int_I \sqrt{h
(\dot{x}, \dot{x})} \, \dd t$. Since for a continuous causal curve
there is no interval $[t_1,t_2]$ such that $\gamma([t_1,t_2])=p \in
M$,  the map $s(t)=l_0(\gamma\vert_{[t_0,t]})$, $t_0,t \in I$, is
 increasing and hence
invertible. Thus any continuous causal curve can  be reparametrized
with respect to the Riemannian length with an invertible
transformation.

The Lorentzian distance $d: M\times M \to [0,+\infty]$ is defined so
that $d(x,z)$, $x,z \in M$, is the supremum over the Lorentzian
lengths of the piecewise $C^1$ causal curves connecting $x$ to $z$
(piecewise $C^1$ can be replaced with ``continuous''). Curiously, it
is quite easy to prove that the  Lorentzian distance is lower
semi-continuous \cite[Lemma 4.4]{beem96}, while the proof of the
upper semi-continuity of the length  functional is more involved. I
give a version which is particularly suitable for our purposes. It
improves the version of  \cite[theorem 7.5]{penrose72} in that the
curves of the sequence as well as the limit curve may have or may
not have endpoints (which if present do not need to be fixed) and
strong causality is not assumed (thus embodying Galloway's
observation). The $C^{0}$ convergence is replaced with the
convergence in the uniform topology, a small price to be paid in
order to get rid of the strong causality assumption.

Recall that if $h$ is a Riemannian metric on $M$ and $d_0$ is the
associated Riemannian distance then $\gamma_n: I \to M$ converges
uniformly to $\gamma:I \to M$ if for every $\epsilon >0$ there is
$N>0$, such that for $n>N$, and for every $t \in I$,
$d_0(\gamma(t),\gamma_n(t)) <\epsilon$. For the next application
this definition is too restrictive and must be generalized to the
case in which the domains  of $\gamma_n$ differ.

\begin{definition} \label{vga}
(In this definition $a_n$, $b_n$, $a$, $b$, may take an infinite
value.) Let $h$ be a Riemannian metric on $M$ and let $d_0$ be the
associated Riemannian distance. The sequence of curves $\gamma_n:
[a_n,b_n] \to M$ converges {\em $h$-uniformly} to $\gamma:[a,b] \to
M$ if $ a_n \to a$, $ b_n \to b$, and for every $\epsilon
>0$ there is $N>0$, such that for $n>N$, and for every $t \in [a,b]\cap
[a_n,b_n]$, $d_0(\gamma(t),\gamma_n(t)) <\epsilon$.

The sequence of curves $\gamma_n: [a_n,b_n] \to M$ converges {\em
$h$-uniformly on compact subsets} to $\gamma:[a,b] \to M$ if for
every compact interval $[a',b'] \subset [a,b]$,  there is a choice
of sequences $a'_n ,b'_n \in [a_n,b_n]$, $a'_n <b'_n$, such that $
a'_n \to a'$, $ b'_n \to b'$, and for any such choice
$\gamma_n\vert_{[a'_n,b'_n]}$ converges $h$-uniformly to
$\gamma\vert_{[a',b']}$.

\end{definition}

\begin{remark} \label{rem}
Clearly, if  $\gamma_n: [a_n,b_n] \to M$ converges $h$-uniformly to
$\gamma:[a,b] \to M$  then $\gamma_n$ converges $h$-uniformly on
compact subsets to $\gamma$. Conversely, if $\gamma_n: [a_n,b_n] \to
M$ converges $h$-uniformly on compact subsets to $\gamma:[a,b] \to
M$, $[a,b]$ is compact and $ a_n \to a$, $ b_n \to b$, then
$\gamma_n$ converges $h$-uniformly to $\gamma$.
\end{remark}

\begin{remark}
Actually, the $h$-uniform convergence on compact subsets is
independent of the Riemannian metric $h$ chosen. The reason is that
if the domain of $\gamma:[a,b] \to M$ is compact then  the same is
true for its image and it is possible to find a open set of compact
closure $A$, containing $\gamma([a,b])$. Then on $\bar{A}$, given a
different Riemannian metric $h'$, there are constants $m$ and $M$
such that $m h' <h <M h'$.
\end{remark}

\begin{theorem} \label{ups}
Let $\gamma:[a,b] \to M$,  be a continuous causal curve in the
spacetime $(M,g)$ and let $h$ be a Riemannian metric on $M$.

\begin{itemize}
\item[(a)] If the sequence of continuous causal
curves $\gamma_n: [a,b] \to M$  converges $h$-uniformly to $\gamma$,
then $\limsup l(\gamma_n) \le l(\gamma) $.
\item[(b)] If the sequence of continuous causal
curves $\gamma_n: [a_n,b_n] \to M$  converges $h$-uniformly to
$\gamma$ and the curves $\gamma_n$ are parametrized with respect to
$h$-length, then $\limsup l(\gamma_n) \le l(\gamma) $. Moreover,
$\gamma_n$ converges to $\gamma$ in the $C^{0}$ topology and for
every sequence $t_n \in [a_n,b_n]$, $t_n \to t \in [a,b]$, it is $
\gamma_{n}(t_n) \to \gamma(t)$.
\end{itemize}
\end{theorem}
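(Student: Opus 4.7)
My plan is to prove (a) by working locally in convex normal neighborhoods, where a causal geodesic maximizes Lorentzian length among causal curves joining its endpoints. Given $\epsilon>0$, using the compactness of $\gamma([a,b])$ and the very definition of $l(\gamma)$ as the infimum of the lengths of interpolating causal geodesic polygons, I would choose a partition $a=t_0<t_1<\cdots<t_k=b$ such that each segment $\gamma([t_{i-1},t_i])$ lies in a convex normal neighborhood $U_i$ and
\[
\sum_{i=1}^{k}L_i(\gamma(t_{i-1}),\gamma(t_i))\le l(\gamma)+\epsilon,
\]
where $L_i(p,q)$ denotes the Lorentzian length of the unique causal geodesic segment of $U_i$ connecting two causally related points. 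By $h$-uniform convergence, for $n$ sufficiently large $\gamma_n([t_{i-1},t_i])\subset U_i$ for every $i$. Inside each $U_i$ the reverse triangle inequality gives $l(\gamma_n\vert_{[t_{i-1},t_i]})\le L_i(\gamma_n(t_{i-1}),\gamma_n(t_i))$, hence $l(\gamma_n)\le\sum_i L_i(\gamma_n(t_{i-1}),\gamma_n(t_i))$. Since $\gamma_n(t_i)\to\gamma(t_i)$ and $L_i$ is continuous on its domain (through the smoothness of $\exp^{-1}$), passing to $\limsup$ yields $\limsup l(\gamma_n)\le l(\gamma)+\epsilon$; the arbitrariness of $\epsilon$ gives (a).

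For part (b), I would first observe that each $\gamma_n$, being $h$-arclength parametrized, is $1$-Lipschitz with respect to $d_0$, a property inherited by the pointwise limit $\gamma$. For any sequence $t_n\to t$ with $t_n\in[a_n,b_n]$ I would pick an auxiliary $s\in[a,b]\cap[a_n,b_n]$ close to $t$ (available for large $n$ since $a_n\to a$, $b_n\to b$) and bound
\[
d_0(\gamma_n(t_n),\gamma(t))\le |t_n-s|+d_0(\gamma_n(s),\gamma(s))+|s-t|,
\]
each term controlled by the Lipschitz estimate and the $h$-uniform convergence. This simultaneously yields $\gamma_n(t_n)\to\gamma(t)$ and the $C^{0}$ convergence. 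For the length inequality I would then reparametrize each $\gamma_n$ linearly from $[a_n,b_n]$ onto the (compact) common domain $[a,b]$; this preserves the Lorentzian length, and the combination of the Lipschitz property with $\phi_n\to\mathrm{id}$ uniformly implies that the reparametrized sequence still converges $h$-uniformly to $\gamma$, so part (a) applies directly.

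The main technical obstacle is the partition step of (a): one must simultaneously arrange that each subsegment fits into a convex normal neighborhood and that the associated geodesic-polygon sum is bounded above by $l(\gamma)+\epsilon$. The second requirement is formally the definition of $l(\gamma)$ as an infimum, but one must also verify that refining a valid polygon partition does not increase the total polygon length, which is itself a manifestation of the reverse triangle inequality inside convex normal neighborhoods. A secondary concern is the continuity of $L_i$ as its arguments approach the causal boundary (lightlike-related or coincident points), which is handled by the smoothness of the exponential map.
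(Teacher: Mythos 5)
Your proof is correct; part (a) follows essentially the paper's own route, while the length estimate in part (b) takes a genuinely different and somewhat cleaner path. For (a), both arguments rest on the same three ingredients: a partition of $[a,b]$ subordinate to convex neighborhoods $U_i$ whose interpolating geodesic polygon has length at most $l(\gamma)+\epsilon$, the local maximality $l(\gamma_n\vert_{[t_{i-1},t_i]})\le d_i(\gamma_n(t_{i-1}),\gamma_n(t_i))$ inside each $U_i$, and the finiteness and continuity of the local Lorentzian distance $d_i$ coming from the smoothness of the exponential map. The only difference is cosmetic: you pass to the limit in $d_i(\gamma_n(t_{i-1}),\gamma_n(t_i))$ directly by continuity, whereas the paper inserts auxiliary points $y_i\ll_{U_i} \gamma(t_i)\ll_{U_{i-1}} z_i$ and bounds $l(\gamma_n^i)$ by $d_i(y_i,z_{i+1})$; the two devices do the same job, and your worry about refinement of polygon partitions is indeed settled by the reverse triangle inequality for $d_i$, exactly as you say. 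For (b), the paper restricts all curves to a common subinterval $[a',b']\subset [a,b]\cap[a_n,b_n]$, applies (a) there, controls the Lorentzian length of the leftover tails by $M\,l_0 = M\,\vert a'-a\vert$ etc.\ via a compactness bound $-g<M^2h$ near the image of $\gamma$, and then lets $a'\to a$, $b'\to b$. Your alternative --- an affine reparametrization $\phi_n:[a,b]\to[a_n,b_n]$, which preserves Lorentzian length, combined with the observation that the $1$-Lipschitz property of $h$-arclength parametrized curves (inherited by $\gamma$) and the uniform convergence $\phi_n\to\mathrm{id}$ push the $h$-uniform convergence through to $\gamma_n\circ\phi_n$ --- reduces (b) entirely to (a) and dispenses with the comparison constant $M$; one only has to exclude the trivial degenerate case $a=b$, where $\phi_n$ is undefined but the conclusion is immediate. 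The $C^{0}$ and pointwise statements are handled by the same triangle-inequality-with-auxiliary-parameter device as in the paper, and your remark that the pointwise claim for arbitrary sequences $t_n\to t$ already yields $C^{0}$ convergence (by compactness of $[a,b]$) is sound.
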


\begin{proof}

%

Proof of (a). Given $\varepsilon>0$ a partition of $[a,b]$ can be
found into intervals $[t_1,t_{i+1}]$, $1\le i \le m-1$, $t_i\in
[a,b]$, $t_1=a$, $t_m=b$, $t_i<t_{i+1}$, such that  the
interpolating geodesic $\eta$ passing through the events
$x_i=\gamma(t_i)$ has a length $l(\eta)\le l(\gamma)+\varepsilon/2$,
and there are convex sets $U_i$, $1\le i \le m-1$, covering
$\gamma$ such that $\gamma\vert_{[t_i,t_{i+1}]} \subset U_i$ (recall
the length definition). In particular $x_i \in U_{i-1}\cap U_i$.
%
%
%

For every $i$ let events $y_i,z_i \in U_{i-1}\cap U_i$ be chosen
such that $y_i\ll_{U_i} x_i \ll_{U_{i-1}} z_i$. Thanks to the
smoothness of the exponential map \cite[Lemma 5.9]{oneill83} the
Lorentzian distance $d_i: U_i\times U_i \to [0,+\infty]$ is finite
and continuous for each $i$. Thus the events $y_i,z_{i+1}\in U_i$
can be chosen close enough to $x_i$ and $x_{i+1}$ so that
$d_i(y_i,z_{i+1})<d_i(x_i,x_{i+1})+\varepsilon /(2m)$. Since the
image of $\gamma$ is compact and the convergence is uniform, it is
possible to find $N>0$, such that for $n>N$, $\gamma_{n} \subset A=
\bigcup^{m-1}_{i=1} U_i$ and $\gamma_n(t_i) \in I_{U_i}^{+}(y_i)\cap
I_{U_{i-1}}^{-}(z_i)$ for $2\le i\le m-1$, $\gamma_n(t_1) \in
I_{U_1}^{+}(y_1)$, $\gamma_n(t_m) \in I_{U_{i-1}}^{-}(z_m)$. The
curves $\gamma_n$ split into curves
$\gamma_n^{i}=\gamma_n\vert_{[t_i,t_{i+1}]}$ contained in $U_i$.
Now, note that the curve $\gamma^{i}_n$ can be considered as the
segment of a longer causal curve that connects $y_i$ to $z_{i+1}$
entirely contained in $U_i$, thus $l(\gamma^i_n)\le
d_i(y_i,z_{i+1})$. Finally,
\[
l(\gamma_n)\le \sum_{i=1}^{m-1} d(y_i,z_{i+1})\le \sum_{i=1}^{m-1}
d_i(x_i,x_{i+1})+ \frac{\varepsilon}{2}
=l(\eta)+\frac{\varepsilon}{2}  \le l(\gamma)+\varepsilon.
\]

Proof of (b).  Given a compact $C$, there is a constant $M>0$ such
that $-g<M^2 h$. Indeed, since $(M,g)$, by definition of spacetime,
is time orientable there is a global normalized timelike vector
field $v$, $g(v,v)=-1$. Let $\eta=g(\cdot, v)$ be the associated
1-form and define $g^{\perp}$ of signature $(0,+,\ldots,+)$ so that,
$g=-\eta \otimes \eta+g^{\perp}$. The metric $h'=\eta \otimes
\eta+g^{\perp}$ is Riemannian and given a vector $w$,
$-g(w,w)=\eta(w)^{2}-g^{\perp}(w,w)\le
\eta(w)^{2}+g^{\perp}(w,w)=h'(w,w)$. Since $C$ is compact there is
$M>0$ such that $h'< M^2 h$.

Let $A$ be an open set of compact closure, $\bar{A}=C$, containing
$\gamma$, and let $\Delta>0$ the Riemannian distance between
$\gamma$ and $\dot{A}$. For every $a',b'\in (a,b)$, $a'<b'$, $\vert
b-b'\vert<\Delta/2$, $\vert a-a' \vert <\Delta/2$,  it is, for
sufficiently large $n$, $[a',b']\subset [a_n,b_n]$, thus if the
curves are restricted to the interval $[a',b']$ (a) holds, $\limsup
l(\gamma_n\vert_{[a',b']})\le l(\gamma\vert_{[a',b']})$. Because of
uniform convergence for sufficiently large $n$, it holds
$d_0(\gamma,\gamma_n([a',b']))<\Delta/2$, and since $\gamma_n$ are
parametrized with respect to $h$-length $d_0(\gamma_n(a'),
\gamma_n(a)) \le \vert a-a'\vert$ and analogously for the future
endpoint. Using the triangle inequality it follows that $\gamma_n$
is entirely contained in $A$ which proves the $C^{0}$ convergence.
Also $l(\gamma_n\vert_{[a,a']}) \le {M}\,
l_0(\gamma_n\vert_{[a,a']})={M}\,  \vert a'-a\vert $, and
$l(\gamma_n\vert_{[b',b]}) \le {M}\,  \vert b-b'\vert $, so that
$l(\gamma_n)\le l(\gamma_n\vert_{[a',b']})+{M}\, \vert a'-a\vert +
{M}\, \vert b-b'\vert$, and finally $\limsup l(\gamma_n)\le \limsup
l(\gamma_n\vert_{[a',b']})+ {M}\,\vert a'-a\vert + {M}\, \vert
b-b'\vert \le l(\gamma\vert_{[a',b']})+ {M}\,\vert a'-a\vert + {M}\,
\vert b-b'\vert$. Using the arbitrariness of $a'$ and $b'$, $\limsup
l(\gamma_n) \le l(\gamma) $.

Finally, consider the sequence $t_n \in [a_n,b_n]$, $t_n \to t \in
[a,b]$. Let $\epsilon >0$, recall that $\gamma$ is continuous, and
take $t'\in (a,b)$ sufficiently close to $t$  that $d_0(\gamma(t),
\gamma(t'))<\epsilon/4$ and $\vert t'-t\vert \le \epsilon/4$. For
sufficiently large $n$, $t' \in [a_n,b_n]$, and because of
convergence we can also assume
$d_0(\gamma(t'),\gamma_n(t'))<\epsilon/4$. Moreover, if $n$ is
sufficiently large $\vert t_n-t \vert <\epsilon/4$ and finally
$d_0(\gamma(t),\gamma_n(t_n)) \le d_0(\gamma(t),
\gamma(t'))+d_0(\gamma(t'),
\gamma_n(t'))+d_0(\gamma_n(t'),\gamma_n(t_n))\le \frac{1}{2}
\epsilon+\vert t'-t_n\vert \le \epsilon $

\end{proof}

\begin{lemma} \label{mas}
Let $h$ be a Riemannian metric on the spacetime $(M,g)$. Every event
$x \in M$, admits a globally hyperbolic coordinate neighborhood $(V,
x^{\mu})$ ($x^0=const.$ are Cauchy hypersurfaces for $V$)  and a
constant $K>0$ such that if $\gamma: I \to M$ is a continuous causal
curve and $\gamma\vert_{[t_1,t_2]} \subset V$ then
$l_0(\gamma\vert_{[t_1,t_2]}) \le K \vert
x^{0}(\gamma(t_2))-x^{0}(\gamma(t_1))\vert$.
\end{lemma}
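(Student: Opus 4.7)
The plan is to construct the coordinate neighborhood from a standard normal chart at $x$ and then exploit the fact that a causal curve in a chart where $g$ is close to Minkowski has coordinate-time-controlled spatial extent.

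First I would build the chart. Pick coordinates $(x^{\mu})$ around $x$ with $g_{\mu\nu}(x)=\eta_{\mu\nu}$ and $\partial_0$ future-directed timelike. By continuity of $g$, on a sufficiently small cylindrical neighborhood $V=\{|x^0|<T,\;\sum_i (x^i)^2<r^2\}$ of $x$ the slices $\{x^0=\text{const}\}\cap V$ remain spacelike, $\partial_0$ remains future-directed timelike, and the coefficients $g_{\mu\nu}$ stay uniformly close to $\eta_{\mu\nu}$. Standard constructions (e.g.\ by taking $V$ inside a convex normal neighborhood and shrinking $r,T$) guarantee that $V$ can be chosen globally hyperbolic with the slices $\{x^0=\text{const}\}\cap V$ as Cauchy hypersurfaces; this is well known and used implicitly throughout causal theory, so I would invoke it rather than reprove it.

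Next I would bound $h$-length by coordinate-time difference. Since $\gamma$ is continuous causal, by the remarks recalled just before the lemma it is almost everywhere differentiable, locally Lipschitz, and its tangent $\dot\gamma$ (where defined) is future-directed causal. Shrinking $V$ so that $g$ is sufficiently close to $\eta$, there is a constant $c_1>0$ such that every future-directed causal vector $w$ based in $V$ satisfies $\sum_i (w^i)^2 \le c_1^{\,2}(w^0)^2$; in particular the Euclidean norm $|w|_{E}$ of the coordinate components obeys $|w|_E\le c_2\, w^0$. Since $\bar V$ is compact, the Riemannian metric $h$ is equivalent to the coordinate Euclidean metric on $\bar V$, so there is $K>0$ with $\sqrt{h(\dot\gamma,\dot\gamma)}\le K\,\dot x^0$ a.e.\ on the portion of $\gamma$ inside $V$. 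Because $\{x^0=\text{const}\}$ are Cauchy surfaces for $V$, the function $t\mapsto x^0(\gamma(t))$ is monotone non-decreasing on $[t_1,t_2]$, so
\[
l_0(\gamma\vert_{[t_1,t_2]})=\int_{t_1}^{t_2}\!\sqrt{h(\dot\gamma,\dot\gamma)}\,\dd t \le K\int_{t_1}^{t_2}\dot x^0\,\dd t = K\bigl(x^0(\gamma(t_2))-x^0(\gamma(t_1))\bigr),
\]
which is exactly the required estimate (the absolute value being superfluous by monotonicity, but harmless).

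The only genuine obstacle is the first step: justifying that a sufficiently small cylinder around $x$ is globally hyperbolic with the level sets of $x^0$ as Cauchy hypersurfaces. The technical content is that a causal curve entering $V$ from one slice must exit through a later slice and cannot wrap or escape radially; this follows by choosing $r\ll T$ and using the near-Minkowski bound on the causal cones so that causal curves from $\{x^0=-T\}\cap\{|\vec x|\le r/2\}$ cannot reach $\{|\vec x|=r\}$ before $\{x^0=T\}$. Once this standard topological argument is in place, the rest of the proof is the elementary causal-slope estimate above.
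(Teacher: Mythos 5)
Your proof is correct and rests on the same geometric idea as the paper's: pick coordinates in which the causal cones of $g$ are contained in those of the flat metric $g^{+}=-(\dd x^{0})^{2}+\sum_i(\dd x^{i})^{2}$, deduce that the coordinate--Euclidean length of a $g$-causal arc is at most $\sqrt{2}\,\vert\Delta x^{0}\vert$, and pass to $h$-length by compactness of $\bar V$; the existence of the small globally hyperbolic chart is deferred to standard references in both cases. The only real difference is how the slope estimate is implemented. You argue pointwise: you invoke that a continuous causal curve is locally Lipschitz with almost-everywhere future-directed \emph{causal} derivative and integrate $\sqrt{h(\dot\gamma,\dot\gamma)}\le K\,\dot x^{0}$. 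The paper instead stays with its order-theoretic definition of continuous causal curve ($\gamma(t_1)<_U\gamma(t_2)$ in convex sets) and with the definition of $l_0$ as a supremum over interpolating $\tilde h$-geodesics: for a sufficiently fine interpolation the chords are $g^{+}$-causal, each has Euclidean length at most $\sqrt2\,\vert\Delta x^{0}\vert$, and the $x^{0}$-increments telescope by monotonicity. The paper's route deliberately avoids the a.e.-causal-derivative fact --- note that the preliminaries only record Lipschitzness and boundedness of the velocity, not its causal character --- which is true and standard but requires a short separate argument to extract from the order-theoretic definition (via closedness of $J^{+}_U$ in convex sets and the exponential map). Your route needs that fact, or at least the monotonicity of $x^{0}\circ\gamma$, which you can alternatively obtain from acausality of the slices; either way it is an extra imported lemma rather than a gap, so the proof stands.
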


\begin{proof}
Every event $x\in M$ admits arbitrary small globally hyperbolic
neighborhoods \cite{minguzzi06c}, in particular inside a
neighborhood $U\ni x$, of compact closure. The neighborhood admits
coordinates $\{x^{\mu}\}$ so that $g^{+}=-(\dd x^{0})^{2}+\sum_i
(\dd x^{i})^{2}$, is such that $g^{+}>g$ (for details see
\cite[Lemma 2.13]{minguzzi06c}). Since $g^{+}>g$, $\gamma$ is causal
with respect to $g^{+}$.

On the compact $\bar{V}$ consider the Riemannian metric
$\tilde{h}=(\dd x^{0})^{2}+\sum_i (\dd x^{i})^{2}$, then there is a
constant $M>0$ such that $h < M^2 \tilde{h}$ on $\bar{V}$. Let
$t_2>t_1$ such that $\gamma([t_1,t_2]) \subset V$. The
$\tilde{h}$-length of $\gamma\vert_{[t_1,t_2]}$  is the supremum of
the lengths of the interpolating piecewise $C^1$
$\tilde{h}$-geodesics, which for sufficiently fine interpolation are
necessarily $g^+$-causal. Using the condition of $g^{+}$-causality,
calling $\sigma$ one of the interpolating geodesics of
$\gamma\vert_{[t_1,t_2]}$, it is easily seen that
$\tilde{l}_0(\sigma)\le \sqrt{2} \,\vert
x^{0}(\sigma(t_2))-x^{0}(\sigma(t_1)) \vert$, and taking the
supremum over the interpolating geodesics, since the endpoints
remain the same, $\tilde{l}_0(\gamma\vert_{[t_1,t_2]})\le \sqrt{2}\,
\vert x^{0}(\gamma(t_2))-x^{0}(\gamma(t_1)) \vert$, thus
$K=\sqrt{2}\, M$.

\end{proof}

The next lemma had been proved, in one direction, in \cite[Lemma
3.65]{beem96} and in the other direction at the end of the proof of
\cite[Prop. 3.31]{beem96}. This last step is given here a different,
shorter proof.

\begin{lemma} \label{nsf}
Let $(M,g)$ be a spacetime and let $h$ be a complete Riemannian
metric on $M$. A continuous causal curve $\gamma$ once parametrized
with respect to $h$-length has a domain unbounded from above  iff
future inextendible and unbounded from below iff past inextendible.

\end{lemma}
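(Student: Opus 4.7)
The plan is to prove the two equivalences by reduction to each other via time-reversal, so I would concentrate on the future-inextendibility/unboundedness-from-above statement; the past case is obtained by reversing the parametrization. Write the $h$-length parametrized curve as $\gamma:[0,b)\to M$ (or $[0,b]$), where $b\in(0,+\infty]$ is the supremum of the $h$-length. The goal is to show $b=+\infty$ iff $\gamma$ is future inextendible.

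For the easier direction, assume $b<+\infty$ and deduce future extendibility. Since $\gamma$ is parametrized by $h$-arc length, for any $s,t\in[0,b)$ one has $d_0(\gamma(s),\gamma(t))\le|s-t|$, so for any sequence $t_n\to b^-$ the points $\gamma(t_n)$ form a $d_0$-Cauchy sequence. Completeness of $h$ yields a limit $p\in M$, and the Lipschitz estimate above shows the limit is independent of the chosen sequence, so $\gamma$ extends continuously to $[0,b]$ by setting $\gamma(b)=p$. I would then verify that the extension is still causal: pick any convex neighbourhood $U$ of $p$; eventually $\gamma([t^\ast,b))\subset U$, and for $t_1<b$ take $s_n\to b^-$ with $s_n>t_1$, so that $\gamma(t_1)\le_U\gamma(s_n)$; the causal relation $\le_U$ is closed in $U\times U$, hence $\gamma(t_1)\le_U p$, giving $\gamma$ a future endpoint and thus contradicting inextendibility.

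For the reverse direction, assume $\gamma$ is future extendible to a point $p\in M$ and show $b<+\infty$. Apply Lemma~\ref{mas} at $p$ to obtain a globally hyperbolic coordinate neighbourhood $(V,x^\mu)$ and a constant $K>0$ controlling $h$-length by the $x^0$-increment of any causal curve contained in $V$. By continuity of the extended curve there exists $t^\ast<b$ with $\gamma([t^\ast,b])\subset V$. Since $\gamma$ is causal and the slices $x^0=\text{const}$ are Cauchy hypersurfaces for $V$, the function $t\mapsto x^0(\gamma(t))$ is monotone and bounded on $[t^\ast,b]$. Lemma~\ref{mas} then gives
\[
l_0(\gamma|_{[t^\ast,t]})\le K\bigl|x^0(\gamma(t))-x^0(\gamma(t^\ast))\bigr|\le K\bigl|x^0(p)-x^0(\gamma(t^\ast))\bigr|
\]
for every $t\in[t^\ast,b)$, so the $h$-length between $t^\ast$ and $b$ is finite; added to the finite $h$-length up to $t^\ast$, this shows $b<+\infty$, as desired.

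The two potentially delicate points are that completeness of $h$ is genuinely needed (without it the Cauchy argument fails and the curve could ``escape to infinity'' in finite $h$-length), and that the extension of a causal curve by its limit point is still causal, which is where the closedness of $\le_U$ inside a convex normal neighbourhood is used. The rest is routine once Lemma~\ref{mas} is available, since it packages the essential fact that inside a well-chosen chart causal curves have $h$-length controlled by a single coordinate increment.
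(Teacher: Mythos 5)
Your proof is correct, and one half of it takes a genuinely different route from the paper. For the direction ``finite upper bound of the domain implies a future endpoint'' you use the fact that an $h$-arclength parametrization makes $\gamma$ $1$-Lipschitz with respect to $d_0$, so that $\gamma(t_n)$ is $d_0$-Cauchy for any $t_n\to b^-$ and metric completeness of $(M,d_0)$ produces the endpoint directly; the paper instead argues the contrapositive head-on, splitting into the case where $\gamma\vert_{[t,b)}$ escapes every ball $B_n(p)$ (forcing $b-t\ge n$ for all $n$) and the case where it is trapped in a compact set, where an accumulation point $q$ that is not an endpoint forces the curve to enter $B_{1/(2n)}(q)$ and leave $B_{1/n}(q)$ infinitely often, accumulating infinite $h$-length. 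Your Cauchy-sequence argument is shorter and avoids both the case split and the oscillation estimate, at the price of invoking completeness in its metric-space form rather than through Hopf--Rinow compactness of closed balls (the two are of course equivalent). Your verification that the extended curve is still causal via closedness of $\le_U$ in a convex neighbourhood is not strictly needed, since a future endpoint in the usual sense only requires convergence of $\gamma(t)$ as $t\to b^-$, but it is harmless and slightly strengthens the conclusion. For the converse direction your argument is essentially the paper's: both apply Lemma~\ref{mas} near the putative endpoint to bound the $h$-length of the tail by $K$ times a bounded $x^0$-increment; the paper phrases this as the impossibility of $x^{0}(\gamma(n))$ converging while each unit segment $[n,n+1]$ contributes $h$-length $1$, whereas you bound the total tail length uniformly, which amounts to the same estimate.
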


\begin{proof}
Let $(a,b)$ be the interior of a domain obtained by parametrizing
the curve with respect to $h$-length, with possibly $b=+\infty$ and
$a=-\infty$. Assume $\gamma$ future inextendible and let $p=
\gamma(t)$, $t \in (a,b)$, and consider the balls
$B_n(p)=\{q:d_0(p,q)\le n\}$. They are compact because of the
Hopf-Rinow theorem. If $\gamma\vert_{[t,b)}$ is not entirely
contained in $B_n(p)$ for a certain $n$, then
$b-t=l_0(\gamma\vert_{[t,b)}) \ge n$ for all $n$, thus $b=+\infty$.
Otherwise, $\gamma\vert_{[t,b)}$ is contained in a compact and there
is a sequence $t_k \in (a,b)$, $t_k \to b$, such that $\gamma(t_k)
\to q$. But since $q$ can't be a limit point there are $\bar{t}_k
\in (a,b)$, $\bar{t}_k \to b$, such that $\gamma(\bar{t}_k) \notin
B_{1/n}(q)$ for a certain $n$. For sufficiently large $k$,
$\gamma(t_k) \in B_{1/(2n)}(q)$, and hence $\gamma\vert_{[t,b)}$
enters $B_{1/(2n)}(q)$ and escapes $B_{1/n}(q)$ infinitely often,
and thus has infinite length, $b-t=l_0(\gamma\vert_{[t,b)})=+\infty$
$\Rightarrow b=+\infty$.

Assume $b=+\infty$ then if $\gamma$ has a future endpoint $x$ there
is a globally hyperbolic coordinate neighborhood $V$, as given in
lemma \ref{mas}, and a constant $t \in (a,+\infty)$ such that
$\gamma\vert_{[t,+\infty)} \subset V$. But there is also a constant
$K>0$ such that for $n>t$, $1=l_0(\gamma\vert_{[n,n+1]}) \le K \vert
x^{0}(\gamma(n+1))-x^{0}(\gamma(n))\vert$, thus it is impossible
that $x^{0}(\gamma(n)) \to x^{0}(x)$, and hence that $\gamma(n) \to
x$.
\end{proof}

%

The proof of the next lemma is in part contained in \cite[Prop.
3.31]{beem96}, however the original proof contained a  gap that is
fixed here.

\begin{lemma} \label{mae}
Let $(M,g)$ be a spacetime and let $h$ be a Riemannian metric on
$M$. If the continuous causal curves $\gamma_n: I_n \to M$
parametrized with respect to $h$-length converge $h$-uniformly on
compact subsets to $\gamma: I \to M$ then $\gamma$ is a continuous
causal curve.
\end{lemma}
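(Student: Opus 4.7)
The plan is to verify the two defining properties of a continuous causal curve for $\gamma$: continuity, and the local causality condition $\gamma(t_1)<_U\gamma(t_2)$ on every convex set $U$ containing $\gamma([t_1,t_2])$. Continuity follows quickly from a Lipschitz bound inherited from the $h$-length parametrization; the causality condition splits into a closedness argument for the causal relation in a convex chart plus a more delicate non-degeneracy claim, and the latter is precisely where Lemma \ref{mas} must be invoked to patch the gap alluded to before the statement.

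First I would observe that each $\gamma_n$, being parametrized by $h$-length, is $1$-Lipschitz with respect to the Riemannian distance $d_0$: indeed $d_0(\gamma_n(s),\gamma_n(t))\le l_0(\gamma_n\vert_{[s,t]})=\vert s-t\vert$. On every compact $[a',b']\subset I$ Definition \ref{vga} supplies sequences $a'_n\to a'$, $b'_n\to b'$ with $\gamma_n\vert_{[a'_n,b'_n]}\to\gamma\vert_{[a',b']}$ $h$-uniformly; a straightforward triangle-inequality estimate then lifts the $1$-Lipschitz bound to $\gamma$, so $\gamma$ is continuous on each compact subinterval of $I$, hence continuous.

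For causality, fix a convex open set $U$ and $t_1<t_2\in I$ with $\gamma([t_1,t_2])\subset U$. Since $\gamma([t_1,t_2])$ is compact and $U$ open, for large $n$ the same Definition \ref{vga} furnishes $a'_n\to t_1$, $b'_n\to t_2$ with $\gamma_n\vert_{[a'_n,b'_n]}\subset U$. Causality of each $\gamma_n$ then gives $\gamma_n(a'_n)\le_U\gamma_n(b'_n)$, and since the causal relation in a convex neighborhood is closed (via the exponential map description of $J^+$ in $U$), passing to the limit yields $\gamma(t_1)\le_U\gamma(t_2)$.

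The main obstacle, and the step that fills the gap in \cite[Prop. 3.31]{beem96}, is to rule out $\gamma(t_1)=\gamma(t_2)$: such a constancy would satisfy $\le_U$ trivially but would make $\gamma$ violate the very definition of a continuous causal curve. For this I would apply Lemma \ref{mas}: after subdividing $[t_1,t_2]$ if necessary, I may assume $\gamma([t_1,t_2])$ is contained in a single globally hyperbolic coordinate neighborhood $(V,x^\mu)$ with the constant $K$ provided by that lemma. For large $n$, $\gamma_n\vert_{[a'_n,b'_n]}\subset V$, and parametrization by $h$-length gives
\[
b'_n - a'_n \;=\; l_0(\gamma_n\vert_{[a'_n,b'_n]}) \;\le\; K\,\vert x^0(\gamma_n(b'_n))-x^0(\gamma_n(a'_n))\vert .
\]
Passing to the limit yields $t_2-t_1\le K\,\vert x^0(\gamma(t_2))-x^0(\gamma(t_1))\vert$, so in particular $\gamma(t_2)\ne\gamma(t_1)$. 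Combined with $\gamma(t_1)\le_U\gamma(t_2)$ this gives $\gamma(t_1)<_U\gamma(t_2)$, establishing the causality of $\gamma$.
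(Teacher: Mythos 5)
Your proof is correct and follows essentially the same route as the paper: closedness of $J^{+}_{U}$ in a convex neighborhood gives $\gamma(t_1)\le_U\gamma(t_2)$, and Lemma \ref{mas} supplies the estimate $t_2-t_1\le K\,\vert x^{0}(\gamma(t_2))-x^{0}(\gamma(t_1))\vert$ that rules out $\gamma(t_1)=\gamma(t_2)$. The only organizational difference is that you localize into the neighborhoods of Lemma \ref{mas} by subdividing and chaining (legitimate, since convex neighborhoods are causal), whereas the paper first observes that $\gamma(t_1)=\gamma(t_2)$ would force $\gamma$ to be constant on $[t_1,t_2]$ and then applies the same estimate in a small neighborhood of that single point.
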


\begin{proof}
Let $\gamma([t_1,t_2]) \subset U$, $[t_1,t_2] \subset I$, with $U$
($g$-)convex neighborhood. Let $\Delta>0$ be the Riemannian distance
between the compact $\gamma([t_1,t_2])$ and the closed set $U^{C}$.
By uniform convergence on compact subsets there are sequences
$t_{1n}, t_{2n} \in I_n$, such that $t_{1n} \to t_1$, $t_{2n} \to
t_2$, and $\gamma_{n} \vert_{[t_{1n}, t_{2n}]}$ converges
$h$-uniformly to $\gamma\vert_{[t_1,t_2]}$, in particular for large
$n$, $\gamma_{n} \vert_{[t_{1n}, t_{2n}]}$ has an image included in
$U$. Thus, $\gamma_{n}(t_{1n}) \le_{U}  \gamma_{n}(t_{2n})$, and
because of  theorem \ref{ups}, case (b),  $\gamma_{n}(t_{1n})\to
\gamma(t_1)$, $\gamma_{n}(t_{2n})\to \gamma(t_2)$.  Now, recall that
$J^{+}_{U}$ is closed in a convex neighborhood, and hence
$\gamma(t_{1}) \le_{U}  \gamma(t_{2})$. It remains to prove that
$\gamma(t_{1}) \ne   \gamma(t_{2})$ so that $\gamma(t_{1}) <_{U}
\gamma(t_{2})$ (the proof  of \cite[Prop. 3.31]{beem96} lacks this
part). Indeed, if $\gamma(t_{1}) = \gamma(t_{2})$ then $\gamma(t)=
\gamma(t_{1})$, for every $t \in [t_1,t_2]$, otherwise there would
be $\bar{t} \in [t_1,t_2]$, $\gamma(t_1)<_U \gamma(\bar{t})<_U
\gamma(t_1)$ which would violate the causality of $(U,g\vert_{U})$
(recall that every convex neighborhood is causal). Finally,
\begin{align*}
t_2-t_1 &=\limsup (t_{2n}-t_{1n})=\limsup
{l}_0(\gamma_n\vert_{[t_{1n},t_{2n}]}) \\
&\le K\,\limsup \vert
x^{0}(\gamma_n(t_{2n}))-x^{0}(\gamma_n(t_{1n})) \vert =K \vert
x^{0}(\gamma(t_{2}))-x^{0}(\gamma(t_{1})) \vert
\end{align*}
thus if $t_2 \ne t_1$ necessarily $\gamma(t_2) \ne \gamma(t_1)$.

\end{proof}

With slight modifications the next local result is contained in
Lemma 1 of \cite{beem76} (see also  \cite[Lemma 3.1]{aguirre89}).
$J^{+}_S$ denotes the Seifert causal relation
\cite{seifert71,minguzzi07}.

\begin{lemma} \label{kas}
Let $(M,g)$ be a spacetime, and $p \in M$. Let $g_n\ge g$ be a
sequence of metrics such that $g_{n+1}\le g_n$, and assume that the
metrics $g_n$, regarded as sections of the bundle $T^*M\otimes T^*M
\to M$, converge pointwisely to the metric $g$. There is a
$g$-convex  neighborhood $V\ni p$, contained, for all $n$ in
$g_n$-convex neighborhoods $V_n$, such that if $(x_n,z_n) \in
J^{+}_{(V,g_n)}$, and $(x_n,z_n) \to (x,z)$, then $(x,z) \in
J^{+}_{(V,g)}$. In particular $J^{+}_{S (V,g)}=J^{+}_{(V,g)}$.
\end{lemma}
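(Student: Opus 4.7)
The plan is to construct a neighborhood $V$ with the required convexity properties, extract a limit curve from $g_n$-causal curves witnessing $(x_n,z_n)\in J^+_{(V,g_n)}$, and then upgrade the resulting $g_k$-causality of the limit (for every $k$) to $g$-causality via the pointwise convergence $g_n\to g$. First I would pick a $g$-convex neighborhood $\tilde V$ of $p$ of compact closure, and inside it a smaller $g$-convex $V$ with $\bar V\subset\tilde V$. I then claim that $V$ can be arranged so that for every $n$ there is a $g_n$-convex neighborhood $V_n\supset V$; this is the most delicate point, because only pointwise convergence of the metric tensors is assumed, so one cannot appeal to $C^{2}$-stability of the exponential map. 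Instead one must exploit the monotonicity $g_{n+1}\le g_n$ of the cones together with the smoothness of each $g_n$ to secure a uniform positive lower bound on the radii of admissible $V_n$. This step is where I expect the main technical obstacle to lie.

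Next, given $(x_n,z_n)\in J^+_{(V,g_n)}$ with $(x_n,z_n)\to(x,z)$, pick a $g_n$-causal curve $\sigma_n\subset V$ from $x_n$ to $z_n$, parametrized by $h$-arclength on $[0,T_n]$. Since $g_{n+1}\le g_n$, every $\sigma_n$ is also $g_1$-causal, so Lemma \ref{mas} applied to $(M,g_1)$ — after a preliminary shrinking of $V$ so that it fits in the globally hyperbolic chart provided there at $p$ — yields the uniform bound $T_n\le K\,|x^0(z_n)-x^0(x_n)|$. Because $\bar V$ is compact the $T_n$ are bounded, the $\sigma_n$ are equi-Lipschitz in the $h$-distance, and Ascoli--Arzel\`a extracts a subsequence converging $h$-uniformly on compact subsets to a continuous curve $\sigma:[0,T]\to\bar V$ with $\sigma(0)=x$ and $\sigma(T)=z$.

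To conclude, fix $k$: for every $n\ge k$ the curve $\sigma_n$ is $g_k$-causal, so Lemma \ref{mae} applied to $(M,g_k)$ upgrades the limit $\sigma$ to a $g_k$-causal curve. Hence $g_k(\dot\sigma,\dot\sigma)\le 0$ almost everywhere for every $k$, and the pointwise convergence $g_k\to g$ forces $g(\dot\sigma,\dot\sigma)\le 0$ almost everywhere, so $\sigma$ is $g$-causal. Since $\bar V\subset\tilde V$ with $\tilde V$ $g$-convex, $(x,z)\in J^+_{(\tilde V,g)}$, hence the unique $g$-geodesic from $x$ to $z$ in $\tilde V$ is $g$-causal; by $g$-convexity of $V$ and uniqueness of geodesics this coincides with the unique $g$-geodesic in $V$, giving $(x,z)\in J^+_{(V,g)}$. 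The identification $J^+_{S(V,g)}=J^+_{(V,g)}$ then follows at once: $J^+_{(V,g)}\subset J^+_{S(V,g)}$ is immediate from the cone inclusion, while the reverse inclusion is obtained by applying the proven implication to the constant sequences $x_n\equiv x$, $z_n\equiv z$ along any defining decreasing approximation $g_n\downarrow g$ for $J^+_S$.
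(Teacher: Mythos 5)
The paper itself does not prove Lemma \ref{kas}: it defers to Lemma 1 of Beem (1976) and Lemma 3.1 of Aguirre-Dab\'an--Guti\'errez-L\'opez, remarking only that ``slight modifications'' are needed. So your proposal must be judged on its own. Its middle portion --- all $\sigma_n$ are $g_1$-causal, Lemma \ref{mas} applied to $(M,g_1)$ bounds the $h$-lengths, Arzel\`a--Ascoli extracts a limit curve, and Lemma \ref{mae} applied to $(M,g_k)$ for the tail $n\ge k$ makes the limit continuous $g_k$-causal for every $k$ --- is sound and is exactly the machinery the paper builds for this purpose. But two steps are genuine gaps. First, the existence of a single $V$ contained in $g_n$-convex neighborhoods for all $n$ is asserted but not proved, and the mechanism you propose for it cannot work: convexity is governed by the exponential maps, hence by the Christoffel symbols of $g_n$, and neither the pointwise convergence $g_n\to g$ nor the cone monotonicity $g_{n+1}\le g_n$ gives any control whatsoever on first derivatives of the metrics. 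A uniform lower bound on the convexity radii requires (locally) $C^2$ control of the family $g_n$, which is what the cited references actually have and what the paper's ``slight modifications'' tacitly presuppose; as written, the step you yourself flag as the main obstacle is simply left open.

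Second, the concluding passage from ``$g_k(\dot\sigma,\dot\sigma)\le 0$ a.e.\ for every $k$'' to ``$\sigma$ is a continuous $g$-causal curve'' invokes the characterization of continuous causal curves by their almost-everywhere derivative. That characterization is not the definition in force here --- the paper defines continuous causal curves order-theoretically, via $\gamma(t_1)<_U\gamma(t_2)$ in convex sets --- and the implication from an a.e.\ causal, future-directed derivative back to the order-theoretic property is a nontrivial lemma in its own right (one must also track future-directedness of $\dot\sigma$ with respect to $g$, not merely the sign of $g(\dot\sigma,\dot\sigma)$, and rule out pathologies on the null set where $\dot\sigma$ fails to exist or vanishes). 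Note that the paper deliberately routes this deduction the other way: the corollary ``continuous $g_n$-causal for all $n$ implies continuous $g$-causal'' is \emph{derived from} Lemma \ref{kas} precisely to avoid the a.e.\ argument, so importing that argument here without proof leaves the key closing step unsupported. The final reduction of $J^{+}_{S(V,g)}=J^{+}_{(V,g)}$ to the main implication via constant sequences is fine.
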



\begin{corollary}
Let $(M,g)$ be a spacetime and let $h$ be a Riemannian metric on
$M$. Let $g_n\ge g$ be a sequence of metrics such that $g_{n+1}\le
g_n$, and assume that the metrics $g_n$, regarded as sections of the
bundle $T^*M\otimes T^*M \to M$, converge pointwisely to the metric
$g$. A curve which is a continuous $g_n$-causal curve for every $n$
is actually a  continuous $g$-causal curve.

In particular, let $\gamma_n: I_n \to M$ be a continuous
$g_n$-causal curve parametrized with respect to $h$-length, and
assume that the sequence $\gamma_n$ converges $h$-uniformly on
compact subsets to $\gamma: I \to M$, then $\gamma$ is a continuous
$g$-causal curve.
\end{corollary}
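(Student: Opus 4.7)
The corollary has two assertions; the second follows from the first combined with Lemma~\ref{mae}, so the core work lies in the first. My strategy is to apply Lemma~\ref{kas} pointwise along $\gamma$, treating it as a closure device that transfers $g_n$-causal relations down to the limiting metric $g$.

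For the first assertion, fix an arbitrary $p=\gamma(t_0)$ and invoke Lemma~\ref{kas} to produce a $g$-convex neighborhood $V\ni p$ sitting inside $g_n$-convex neighborhoods $V_n$ with the stated closure property. If $[t_1,t_2]\subset I$ satisfies $\gamma([t_1,t_2])\subset V$, then since $\gamma$ is $g_n$-causal with image in $V$ on this interval, it is a $g_n$-causal curve in $V$ from $\gamma(t_1)$ to $\gamma(t_2)$, so $(\gamma(t_1),\gamma(t_2))\in J^{+}_{(V,g_n)}$ for every $n$. Applying Lemma~\ref{kas} to the constant sequences $x_n=\gamma(t_1)$, $z_n=\gamma(t_2)$ yields $(\gamma(t_1),\gamma(t_2))\in J^{+}_{(V,g)}$; since a continuous causal curve cannot be constant on any subinterval, $\gamma(t_1)\neq\gamma(t_2)$, and hence $\gamma(t_1)<_{V}\gamma(t_2)$ in the sense of $g$. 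To handle an arbitrary $g$-convex neighborhood $U$ containing $\gamma([t_1,t_2])$, I would cover the compact image by finitely many neighborhoods $V$ of the above type, each contained in $U$ (Lebesgue number argument), subdivide $[t_1,t_2]$ accordingly, and concatenate the resulting local $g$-causal segments inside $U$; strictness follows because $g$-convex neighborhoods are $g$-causal and thus contain no closed causal curves.

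For the second assertion, fix $n$ and consider the tail $\gamma_k$, $k\ge n$. Iterating $g_{k+1}\le g_k$ yields $g_k\le g_n$ for $k\ge n$, so the light cones of $g_k$ are contained in those of $g_n$ and every $g_k$-causal curve is automatically $g_n$-causal. The tail $(\gamma_k)_{k\ge n}$ is thus a sequence of continuous $g_n$-causal curves parametrized by $h$-length, still converging $h$-uniformly on compact subsets to $\gamma$. Applying Lemma~\ref{mae} to the spacetime $(M,g_n)$ shows that $\gamma$ is a continuous $g_n$-causal curve; since $n$ was arbitrary, the first assertion applies and gives that $\gamma$ is continuous $g$-causal.

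The principal obstacle is the local-to-global step in the first assertion: Lemma~\ref{kas} supplies the $g$-causal relation only inside the particular neighborhoods $V$ it constructs, whereas the definition of continuous $g$-causal curve demands the relation in \emph{every} $g$-convex set $U$ containing the relevant segment. The covering-and-chaining argument that bridges this gap is routine, but some care is needed to conclude strictness in $U$ from the strict local relations, which one secures via the absence of closed causal curves in a convex neighborhood.
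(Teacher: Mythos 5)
Your proposal is correct and follows essentially the same route as the paper: reduce to the special neighborhoods of Lemma~\ref{kas} by a finite covering of the compact image, apply the closure property of that lemma (with constant sequences) to pass from $J^{+}_{(V,g_n)}$ to $J^{+}_{(V,g)}$, and obtain the second assertion by applying Lemma~\ref{mae} to the tail of the sequence in each fixed spacetime $(M,g_n)$ before invoking the first assertion. The only cosmetic difference is your justification of $\gamma(t_1)\neq\gamma(t_2)$: this is immediate from the definition of continuous $g_n$-causal curve applied in the $g_n$-convex set $V_n\supset V$, rather than from non-constancy on subintervals.
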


\begin{proof}
We have to prove that for every ($g$-)convex set $U$, and interval
$[t_1,t_2] \subset I$, such that $\gamma([t_1,t_2]) \subset U$, it
is $\gamma(t_1)<_{(U,g)}\gamma(t_2)$. To this end it is sufficient
to prove the statement with $U$ replaced with the set $V$ whose
properties are given by lemma \ref{kas}. Indeed, $\gamma([t_1,t_2])$
being compact can be covered with a finite number of such
neighborhoods contained in $U$. Thus assume that $U$ has the
properties of lemma \ref{kas}, in particular it is $g$-convex and
contained in $g_n$-convex sets $U_n$. Then
$(\gamma(t_1),\gamma(t_{2})) \in J^{+}_{(U,g_n)}$ and
$\gamma(t_1)\ne \gamma(t_2)$, because $\gamma$ is continuous
$g_n$-causal. Using the property of $U$ it follows
$(\gamma(t_1),\gamma(t_2)) \in J^{+}_{(U,g)}$, and $\gamma(t_1)\ne
\gamma(t_2)$, hence $\gamma$ is continuous $g$-causal.

For every $k>0$, the sequence $\gamma_n$ for $n>k$ converges
$h$-uniformly on compact subsets to $\gamma: I \to M$. Since all the
causal curves are $g_k$-causal, the limit curve $\gamma$ is
continuous $g_k$-causal by lemma \ref{mae}, where $k$ is arbitrary,
hence it is continuous $g$-causal by the previous observation.
\end{proof}

\begin{remark}
The previous result is particularly important in connection with
stable causality. It proves that in many cases the limit curve is
actually $g$-causal though the limiting sequence is made of
$g_n$-causal curves with $g_n\ge g$. Its main idea was successfully
applied by Beem in \cite[Theorem 2]{beem76}. It is important to keep
it in mind because, while the next limit curve theorems will be
stated using sequences of  curves which are causal with respect to
the same metric $g$, the theorems can be easily generalized to the
case contemplated by the previous lemma.
\end{remark}

\begin{definition}
A continuous causal curve $\gamma: I \to M$, is maximizing if, for
every $t_1,t_2 \subset I$, $t_1<t_2$,
$d(\gamma(t_1),\gamma(t_2))=l(\gamma\vert_{[t_1,t_2]})$.

A sequence of continuous causal curves $\gamma_n: I_n \to M$, is
limit maximizing if defined \[\epsilon_n= \sup_{t_1,t_2 \in I_n,
t_1<t_2}
[d(\gamma_n(t_1),\gamma_n(t_2))-l(\gamma_n\vert_{[t_1,t_2]}) ] \ge
0,\] it is $\lim_{n \to +\infty} \epsilon_n=0$.
\end{definition}

In particular a maximizing causal curve is a geodesic without
conjugated points, but for, possibly, the endpoints. If it is
inextendible it is called a {\em line}, if it is future inextendible
but has past endpoint it is called a future  {\em ray} (and
analogously in the past case).

The Lorentzian distance is not a conformal invariant function, as a
consequence the property of being a line or a ray for a causal curve
is not a conformal invariant property. An exception are the
lightlike lines or rays, indeed  they can be given the following
equivalent conformal invariant definition.

\begin{definition}
A {\em lightlike line} is a achronal inextendible continuous causal
curve. A {\em future lightlike  ray} is a achronal future
inextendible continuous causal curve with a past endpoint (and
analogously in the past case).
\end{definition}

The next theorem extends a result by Eschenburg and Galloway
\cite{eschenburg92} to the case of curves  without both endpoints,
their result being already an improvement with respect to
\cite[Prop. 8.2]{beem96} which used strong causality.

\begin{theorem} \label{maxs}
Let $(M,g)$ be a spacetime and let $h$ be a Riemannian metric on
$M$. If the sequence of continuous causal curves $\gamma_n: I_n \to
M$ is limit maximizing, the curves are parametrized with respect to
$h$-length and  the sequence converges $h$-uniformly on compact
subsets to the  curve $\gamma:I \to M$, then $\gamma$ is a
maximizing continuous causal curve.  Moreover, given $[a,b] \subset
I$ there are $[a_n,b_n] \subset I_n$,  such that $ a_n  \to a$, $
b_n  \to b$ and for any such choice
\begin{equation}
\lim l(\gamma_n\vert_{[a_n,b_n]})= \lim
d(\gamma_n(a_n),\gamma_n(b_n))=l(\gamma \vert_{[a,b]}) =
d(\gamma(a),\gamma(b)).
\end{equation}

\end{theorem}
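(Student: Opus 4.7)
The plan is to set up a chain of inequalities squeezing $d(\gamma(a),\gamma(b))$ and $l(\gamma\vert_{[a,b]})$ together, using as the three main ingredients (i) lower semi-continuity of the Lorentzian distance, (ii) the limit maximizing hypothesis, and (iii) upper semi-continuity of the length functional from Theorem~\ref{ups}(b). The fact that $\gamma$ is a continuous causal curve is already handed to us by Lemma~\ref{mae}, so there is no issue there.

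Fix $[a,b]\subset I$. By the definition of $h$-uniform convergence on compact subsets, there exist $a_n,b_n\in I_n$ with $a_n\to a$, $b_n\to b$ such that $\gamma_n\vert_{[a_n,b_n]}$ converges $h$-uniformly to $\gamma\vert_{[a,b]}$; moreover, these restrictions are still parametrized by $h$-length, so Theorem~\ref{ups}(b) applies to them and gives not only $\limsup l(\gamma_n\vert_{[a_n,b_n]})\le l(\gamma\vert_{[a,b]})$ but also $\gamma_n(a_n)\to\gamma(a)$ and $\gamma_n(b_n)\to\gamma(b)$. With this, I can feed the endpoint convergence into lower semi-continuity of $d$ (\cite[Lemma 4.4]{beem96}) to get
\[
d(\gamma(a),\gamma(b))\le \liminf d(\gamma_n(a_n),\gamma_n(b_n)).
\]
The limit maximizing property gives $d(\gamma_n(a_n),\gamma_n(b_n))\le l(\gamma_n\vert_{[a_n,b_n]})+\epsilon_n$ with $\epsilon_n\to 0$, hence $\limsup d(\gamma_n(a_n),\gamma_n(b_n))\le \limsup l(\gamma_n\vert_{[a_n,b_n]})$. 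Combining these with Theorem~\ref{ups}(b) and the trivial bound $l(\gamma\vert_{[a,b]})\le d(\gamma(a),\gamma(b))$,
\[
d(\gamma(a),\gamma(b))\le \liminf d(\gamma_n(a_n),\gamma_n(b_n))\le \limsup l(\gamma_n\vert_{[a_n,b_n]})\le l(\gamma\vert_{[a,b]})\le d(\gamma(a),\gamma(b)),
\]
so every inequality is an equality. This simultaneously forces $l(\gamma\vert_{[a,b]})=d(\gamma(a),\gamma(b))$ (i.e.\ $\gamma$ is maximizing on every compact subinterval, and hence maximizing) and, because $\liminf$ and $\limsup$ of the two numerical sequences coincide with the same value $l(\gamma\vert_{[a,b]})$, the existence of the limits $\lim l(\gamma_n\vert_{[a_n,b_n]})=\lim d(\gamma_n(a_n),\gamma_n(b_n))=l(\gamma\vert_{[a,b]})=d(\gamma(a),\gamma(b))$.

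The only mild subtlety is that the statement asks for these limits to hold \emph{for any} choice of sequences $a_n\to a$, $b_n\to b$ with $[a_n,b_n]\subset I_n$, not just the one produced by the definition of $h$-uniform convergence on compact subsets. This is the step I expect to require a little care: given an arbitrary such choice, I must check that $\gamma_n\vert_{[a_n,b_n]}$ still converges $h$-uniformly to $\gamma\vert_{[a,b]}$ so that the argument above applies verbatim. This however is exactly the clause built into Definition~\ref{vga} ("and for any such choice..."), combined with Remark~\ref{rem} which upgrades compact-subset $h$-uniform convergence to $h$-uniform convergence when the limit domain is compact; so the same chain of inequalities goes through. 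There is no other genuine obstacle.
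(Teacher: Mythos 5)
Your proposal is correct and follows essentially the same route as the paper's own proof: Lemma~\ref{mae} for causality of the limit, then the identical squeeze chain combining lower semi-continuity of $d$, the limit maximizing bound $d(\gamma_n(a_n),\gamma_n(b_n))-l(\gamma_n\vert_{[a_n,b_n]})\le\epsilon_n$, Theorem~\ref{ups}(b), and the trivial inequality $l\le d$. Your closing remark on the ``for any such choice'' clause is a correct reading of Definition~\ref{vga} and adds nothing that conflicts with the paper's argument.
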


\begin{proof} The curve $\gamma$ is a continuous causal curve
because of lemma \ref{mae}. Let $[a,b] \subset I$, then there are
$[a_n,b_n] \subset I_n$,  such that $ a_n  \to a$, $ b_n  \to b$,
and $\gamma_n \vert_{[a_n,b_n]}$ converges $h$-uniformly to
$\gamma_{[a,b]}$. Since $d$ satisfies the reverse triangle
inequality
$d(\gamma_n(a_n),\gamma_n(b_n))-l(\gamma_n\vert_{[a_n,b_n]})\le
\epsilon_n$, with $\epsilon_n \to 0$. Using case (b) of theorem
\ref{ups} and the lower semi-continuity of the distance
\begin{align*}
 d(\gamma(a),\gamma(b)) & \le \liminf d(\gamma_n(a_n),\gamma_n(b_n))
\le \limsup d(\gamma_n(a_n),\gamma_n(b_n)) \\
& \le \limsup l(\gamma_n\vert_{[a_n,b_n]}) \le l(\gamma
\vert_{[a,b]}) \le d(\gamma(a),\gamma(b))
\end{align*}
hence $d(\gamma(a),\gamma(b))=l(\gamma \vert_{[a,b]}) $ which
concludes the proof.
\end{proof}

\begin{remark}
 Given two converging sequences $x_n\to x$, $z_n \to z$, such that
 $x_n<z_n$ and $d(x_n,z_n)<+\infty$, it is always possible to
 construct a limit maximizing sequence of curves $\gamma_n:
 [a_n,b_n] \to M$, $x_n=\gamma(a_n)$, $z_n=\gamma(b_n)$. This
 observation is particularly useful, since the existence of a limit
 maximizing sequence is the starting point from which many results
 on the existence of causal rays or lines are obtained. Sometimes,
 the sequence of endpoints  may not satisfy $d(x_n,z_n)<+\infty$. In
 this case, provided the spacetime is strongly causal,
 it is still possible to construct a sort of limit maximizing
 sequence. The reader is referred to \cite[Chap. 8]{beem96} for details.
\end{remark}

The next lemma develops an idea  used by Eschenburg and Galloway in
\cite[Lemma 1]{eschenburg92}.

\begin{lemma} \label{vja}
Let $(M,g)$ be a spacetime and let $h$ be a complete Riemannian
metric on $M$.
\begin{itemize}
\item[(i)] If the continuous causal curve $\gamma:(a,b) \to M$ parametrized
with respect to $h$-length, is such that
$l(\gamma_{[t,b)})=+\infty$, for one (and hence every) $t \in (a,b)$
then $b=+\infty$ (and analogously in the past case).
\item[(ii)] If the
sequence of continuous causal curves $\gamma_n:(a_n,b_n) \to M$
parametrized with respect to $h$-length, is such that $l(\gamma
\vert_{[t_n,b_n)}) \to +\infty$, for a sequence $t_n \in (a_n,b_n)$,
$t_n \to t$, such that $\gamma_n(t_n) \to q \in M$, then $b_n \to
+\infty$ (and analogously in the past case).
\end{itemize}
\end{lemma}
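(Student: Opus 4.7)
The plan is to argue by contradiction in both parts, exploiting the fact that if a curve parametrized by $h$-length has bounded $h$-length on some subinterval, then completeness of $h$ together with the Hopf--Rinow theorem confines its image to a compact set, on which the Lorentzian line element is dominated by the Riemannian one. This last domination is exactly the one established at the start of the proof of Theorem \ref{ups}(b): on any compact $C$ there is a constant $M>0$ with $-g<M^{2}h$ on tangent vectors over $C$, hence $l(\sigma)\le M\, l_{0}(\sigma)$ for any causal curve $\sigma$ whose image lies in $C$.

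For part (i), suppose by contradiction that $b<+\infty$, and fix $t\in(a,b)$. Since $\gamma$ is parametrized with respect to $h$-length, for every $s\in[t,b)$ one has $d_{0}(\gamma(t),\gamma(s))\le l_{0}(\gamma\vert_{[t,s]})=s-t<b-t$, so $\gamma([t,b))$ lies in the closed $h$-ball of radius $b-t$ centered at $\gamma(t)$, which is compact by the Hopf--Rinow theorem. Applying the domination above on this compact set yields $l(\gamma\vert_{[t,b)})\le M\,l_{0}(\gamma\vert_{[t,b)})=M(b-t)<+\infty$, contradicting the hypothesis. The parenthetical ``for one (and hence every) $t$'' is automatic, since changing the base point $t$ alters $l(\gamma\vert_{[t,b)})$ only by a finite additive amount.

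Part (ii) runs on the same mechanism, with the extra requirement that the compact set be chosen \emph{uniformly} in $n$. Assume for contradiction that $b_{n}\not\to +\infty$; after extracting a subsequence we may assume $b_{n}-t_{n}\le L$ for some constant $L$ and all large $n$. The $h$-length parametrization gives $\gamma_{n}([t_{n},b_{n}))\subset \bar{B}_{L}(\gamma_{n}(t_{n}))$, and since $\gamma_{n}(t_{n})\to q$, for $n$ large we have $\gamma_{n}(t_{n})\in\bar{B}_{1}(q)$. By the triangle inequality $\gamma_{n}([t_{n},b_{n}))$ is then contained in the single compact set $\bar{B}_{L+1}(q)$, independent of $n$. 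On this fixed compact set a single constant $M>0$ with $-g<M^{2}h$ applies, giving $l(\gamma_{n}\vert_{[t_{n},b_{n})})\le M(b_{n}-t_{n})\le ML$ for all large $n$, contradicting $l(\gamma_{n}\vert_{[t_{n},b_{n})})\to+\infty$. The past-directed statements follow by reversing time orientation.

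The only subtle point, and thus the main (minor) obstacle, is isolating in part (ii) a \emph{single} compact set containing all the tails $\gamma_{n}([t_{n},b_{n}))$, so that the constant $M$ dominating $-g$ by $M^{2}h$ can be chosen once and for all; this is precisely the reason for enlarging the moving balls $\bar{B}_{L}(\gamma_{n}(t_{n}))$ to the fixed ball $\bar{B}_{L+1}(q)$ using the convergence $\gamma_{n}(t_{n})\to q$.
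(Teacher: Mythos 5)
Your proposal is correct and follows essentially the same route as the paper: Hopf--Rinow compactness of closed $h$-balls, the bound $-g<M^{2}h$ on a compact set (borrowed from the proof of Theorem \ref{ups}(b)), and the fact that $h$-length parametrization makes $l_{0}$ equal to the parameter increment. The only cosmetic difference is that you run both parts directly by contradiction (using the boundedness of $b-t$, resp.\ $b_{k}-t_{k}$, to confine the tails to a single compact ball at once), whereas the paper first splits into the case where the tail escapes every ball; the content is the same.
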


\begin{proof}
Proof of (i). Let $p=\gamma(t)$, if $\gamma_{[t,b)}$ escapes every
ball $B_{n}(p)$ necessarily $b-t=l_0(\gamma \vert_{[t,b)})=+\infty$,
thus   $b = +\infty$. Otherwise, the image of $\gamma_{[t,b)}$ is
contained in a compact (Hopf-Rinow theorem) $C=B_{n}(p)$ for a
suitable $n$. Given the compact  $C$, there is a constant $M>0$ such
that $-g<M^2 h$ (see the proof of (b) theorem \ref{ups}), then
$b-t=l_0(\gamma \vert_{[t,b)})>M^{-1} l(\gamma
\vert_{[t,b)})=+\infty$ which implies $b = +\infty$.

Proof of (ii). Assume not then there is a subsequence $\gamma_k$
such that $b_k <N$, for a suitable $N>0$. If for every $s>0$ only a
finite number of $\gamma_k \vert_{[t_k,b_k)}$ is entirely contained
in $B_{s}(q)$ necessarily $b_k-t_k=l_0(\gamma_k \vert_{[t_k,b_k)})
\to +\infty$ which implies $b_k \to +\infty$ a contradiction. Thus
there is subsequence $\gamma_i$ of $\gamma_k$ which is entirely
contained in a compact $C=B_{s}(q)$ for a suitable $s>0$. But there
is a constant $M>0$ such that $-g<M^2 h$ on $C$, then
$b_i-t_i=l_0(\gamma \vert_{[t_i,b_i)})>M^{-1} l(\gamma_i
\vert_{[t_i,b_i)})\to +\infty$ which implies $b_i \to +\infty$,
while $b_i <N$. The overall contradiction proves that $b_n \to
+\infty$.

\end{proof}

\begin{theorem} \label{xxa}
Let $\gamma: [a,b] \to M$ be a causal curve such that
$\gamma(a)=\gamma(b)$ then either (i) $\gamma'(a) \propto
\gamma'(b)$ and $\gamma$ is obtained from the domain restriction of
a closed lightlike line $\eta$ or (ii) $\gamma$ is entirely
contained in the chronology violating set. Moreover, if $\gamma$
does not intersect the closure of the chronology violating set then
(i) holds and $\eta$ is a complete geodesic.
\end{theorem}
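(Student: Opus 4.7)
Write $p=\gamma(a)=\gamma(b)$ and let $\mathcal{C}$ denote the chronology violating set. The plan is to use the loop closure $\gamma(a)=\gamma(b)$ to split into cases according to whether $\gamma$ carries a chronological pair, and to handle the tangent-matching at $p$ via exponential-map geometry. The key preliminary I would establish is a \emph{loop lemma}: if there exist $s,t\in[a,b]$ with $\gamma(s)\ll\gamma(t)$, then $\gamma\subset\mathcal{C}$. Indeed, for every $r\in[a,b]$ the identification $\gamma(a)=\gamma(b)$ provides causal paths $\gamma(r)\le\gamma(s)$ and $\gamma(t)\le\gamma(r)$ (traversing the loop once if needed), and chaining with $\gamma(s)\ll\gamma(t)$ yields a closed timelike curve through $\gamma(r)$. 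This puts us in alternative~(ii).

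Suppose instead that no two points of $\gamma$ are chronologically related. Then $\gamma$ is achronal and, by the classical result that a locally achronal continuous causal curve is an unparametrized lightlike geodesic, can be reparametrized as a smooth affine null geodesic, so $\gamma'(a),\gamma'(b)\in T_pM$ are well defined. I claim $\gamma'(a)\propto\gamma'(b)$: otherwise they would be two linearly independent future null vectors at $p$, whence $g(\gamma'(a),\gamma'(b))<0$, and a standard exponential-map calculation in a convex neighbourhood of $p$ shows $\gamma(b-\varepsilon)\ll\gamma(a+\varepsilon)$ for all small $\varepsilon>0$, contradicting the loop lemma. Extending $\gamma$ periodically and reparametrizing by $h$-length produces, thanks to Lemma~\ref{nsf}, an inextendible continuous causal curve $\eta:\R\to M$ whose image is the compact loop $\gamma([a,b])$; a final application of the loop lemma to pairs of points on $\eta$ forces $\eta$ to be achronal, hence a closed lightlike line, which is alternative~(i).

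For the last clause, $\gamma\cap\overline{\mathcal{C}}=\emptyset$ implies $\gamma\not\subset\mathcal{C}$, so~(i) holds by the previous paragraph. What remains, and what I expect to be the \emph{main obstacle}, is the affine completeness of $\eta$. Writing $\gamma'(b)=c\,\gamma'(a)$ with $c>0$, completeness amounts to $c=1$: if $c\ne 1$, say $c>1$, then the affine length of iterated loops sums geometrically to $Lc/(c-1)<\infty$ and $\eta$ is future-incomplete in the affine sense, despite being inextendible as a continuous causal curve. The strategy is to rule this out by showing $c\ne 1$ forces $p\in\overline{\mathcal{C}}$, contradicting the hypothesis. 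Concretely, I would perturb the initial velocity in a slightly timelike direction and follow the null geodesic flow around the loop: the linearised holonomy scales the null direction by $c\ne1$, which makes a one-parameter family of near-loops contract back to $p$ under iteration; combining this with the limit curve theorem and the openness of the relation $\ll$ should yield a sequence of closed timelike curves accumulating on the image of $\gamma$, whence $p\in\overline{\mathcal{C}}$, a contradiction. Therefore $c=1$ and $\eta$ is a complete geodesic.
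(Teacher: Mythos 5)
Your proof of the dichotomy (i)/(ii) is correct and is essentially the paper's own argument: the same loop observation (one chronological pair $\gamma(s)\ll\gamma(t)$ plus the closure $\gamma(a)=\gamma(b)$ forces $r\ll r$ for every $r$ on the curve) gives alternative (ii), and in the achronal case the same corner-rounding argument at $p$ forces $\gamma'(a)\propto\gamma'(b)$, after which periodic extension (inextendible by Lemma \ref{nsf}, achronal by another application of your loop lemma) yields the closed lightlike line $\eta$. Up to this point the two proofs coincide step for step.

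The one place where you stop short of a proof is the final completeness clause. You correctly reduce it to showing that the holonomy factor $c$ in $\gamma'(b)=c\,\gamma'(a)$ equals $1$, and your computation that $c\ne 1$ makes $\eta$ affinely incomplete is right. But the substantive content of the step is the converse construction: that an incomplete closed null geodesic admits a variation producing closed timelike curves arbitrarily close to it, so that $\gamma$ would meet the closure of the chronology violating set. You describe this only as a strategy (``should yield a sequence of closed timelike curves''), and the limit curve theorem together with openness of $\ll$ does not by itself supply it --- one needs the explicit variation field along the closed null geodesic that trades the affine-parameter gain for a timelike displacement. The paper handles exactly this point by citing Proposition 6.4.4 of Hawking and Ellis \cite{hawking73}, which is precisely the statement you are sketching; either carry out that construction or cite it. With that reference inserted, your argument is complete and equivalent to the paper's.
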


\begin{proof}
If $\gamma$ is not a achronal then there are points $p,q \in \gamma$
such that $p \ll q$. Take arbitrary $r \in \gamma$, since $\gamma$
is closed, $r \le p$ and $q \le r$, thus $r \le p \ll q \le r$, i.e.
$r\ll r$, that is $r$ belongs to the chronology violating set. Thus
$\gamma$ is achronal or (ii) holds. Assume (ii) does not hold. The
achronality implies that $\gamma$ is a lightlike geodesic. Note that
if $\gamma'(a) \propto \gamma'(b)$ does not hold then rounding the
corner it is possible to find $p,q \in \gamma$ such that $p \ll q$.
Thus if (ii) does not hold taking infinite rounds over $\gamma$ it
is possible to obtain an achronal inextendible causal curve i.e. a
lightlike line. The last statement follows from proposition 6.4.4 of
\cite{hawking73}.
\end{proof}

\section{The limit curve theorem}

In the following limit curve theorem the sequence $\gamma_n$ is made
of continuous causal curves parametrized with respect to $h$-length.
The statement that the curves converge $h$-uniformly on every
compact subset is a very powerful result that contains a lot of
information. It is then natural to formulate the theorem so as to
mention the role of the parametrization. Nevertheless, since the
Riemannian length functional is lower semi-continuous but non upper
semi-continuous the parametrization of the limit curve is not
necessarily the natural $h$-length parametrization.

The theorem is quite lengthy, and  at first its meaning may be
difficult to grasp. However, in applications the reader may use the
information on the converging sequence  to select a particular case
among those there considered. Case (1) and (2) can be regarded as
particular cases of case (3). However,  they are stated separately
as they have some peculiarities which are  particularly useful in
applications.

Since the theorem deals with a sequence of curves possibly with
endpoints, some additional mild requirements are required in order
to guarantee that the sequence does not shrink to a single event and
that the limit curve does indeed exist.

\begin{theorem} \label{main}
 Let $(M,g)$ be a spacetime,  and let $h$ be a complete
Riemannian metric.

\begin{itemize}
\item[(1)] [One converging sequence case] (here $a_k$, $b_k$, $a$, $b$, may take an infinite
value.) Let $y$ be an accumulation point of a sequence  of
continuous causal curves.  There is a subsequence parametrized with
respect to $h$-length, $\gamma_k: [a_k,b_k] \to M$, $0 \in
[a_k,b_k]$, such that $\gamma_k(0) \to y $ and such that the next
properties hold. There are $a\le 0$ and $b\ge 0$, such that $a_k \to
a$ and $b_k \to b$. If there is a neighborhood $U$ of $y$ such that
only a finite number of $\gamma_k$ is entirely contained in $U$
(which happens iff $b>0$ or $a<0$) then there is a continuous causal
curve $\gamma:[a,b] \to M$, $0\in [a,b]$, $y=\gamma(0)$, such that
$\gamma_k$ converges $h$-uniformly on compact subsets to $\gamma$.

In particular if $a>-\infty$, then $a_k>-\infty$ and
$x_k=\gamma_k(a_k)$ converges to $x=\gamma(a)$. Analogously, if
$b<+\infty$, then $b_k<+\infty$, and $z_k=\gamma_k(b_k)$ converges
to $z=\gamma(b)$.  If $a_k>-\infty$, $x_k=\gamma_k(a_k) \to +\infty$
or $l(\gamma_k\vert_{[a_k,0]}) \to +\infty$ then $a=-\infty$, and if
$b_k<+\infty$, $z_k=\gamma(b_k) \to +\infty$ or
$l(\gamma_k\vert_{[0,b_k]}) \to +\infty$ then $b=+\infty$.

All the mentioned parametrized curves, the sequence $\gamma_k$ and
$\gamma$, are future inextendible iff their interval of definition
is unbounded from above, and past inextendible iff their interval of
definition is unbounded from below.

Finally, if $\gamma_k$ is limit maximizing then $\gamma$ is
maximizing.
\item[(2)] [Two converging sequences case] (here $b$ may take an infinite
value.) Let $\gamma_n$ be a sequence of continuous causal curves
with past endpoint $x_n$ and future endpoint $z_n$ such that $x_n
\to x$, $z_n \to z$. There is a subsequence parametrized with
respect to $h$-length denoted $\gamma^x_k: [0,b_k] \to M$,  such
that $x_k=\gamma^x_k(0) \to x $, $z_k=\gamma^x_k(b_k) \to z $, a
analogous reparametrized sequence of continuous causal curves
$\gamma^z_k(t)=\gamma^x_k(t+b_k)$, $\gamma^z_k: [-b_k,0] \to M$,
such that $x_k=\gamma^z_k(-b_k) \to x $, $z_k=\gamma^z_k(0) \to z $,
all such  that the next properties hold. There is $b\ge 0$, such
that $b_k \to b$. If there is a neighborhood $U$ of $x$ such that
only a finite number of $\gamma_k$ is entirely contained in $U$
(which is true iff $b>0$ or if $x\ne z$ or if $x_n=z_n$) then there
is a continuous causal curve $\gamma^x:[0,b] \to M$,
$x=\gamma^x(0)$, and a continuous causal curve $\gamma^z:[-b,0] \to
M$, $z=\gamma^z(0)$, such that $\gamma^x_k$ converges $h$-uniformly
on compact subsets to $\gamma^x$, and $\gamma^z_k$ converges
$h$-uniformly on compact subsets to $\gamma^z$.

There are two cases,
\begin{itemize}
\item $0<b<+\infty$, $\gamma^z(t)=\gamma^x(t+b)$, $\gamma^x(b)=z$,
$\gamma^z(-b)=x$, so that $\gamma^x$ and $\gamma^z$ connect $x$ to
$z$, they are one the reparameterization of the other and $\limsup
l(\gamma^x_k) \le l(\gamma^x)$,
\item  $b=+\infty$, $\gamma^x$ is future inextendible, $\gamma^z$ is
past inextendible and given $t^x \in [0,+\infty)$, $t^z \in
(-\infty,0]$, there is $K(t^x,t^z)>0$ such that for $k>K$,
$\gamma_k^x(t^x) \le \gamma_k^z(t^z)$, in particular for every
choice of $\bar{x} \in \gamma^x$ and $\bar{z} \in \gamma^z$, it is
$(\bar{x},\bar{z}) \in \bar{J}^{+}$. If $\gamma^x$ is not a
lightlike ray then $\gamma^z \subset \bar{J}^{+}(x)$, and if
$\gamma^z$ is not a lightlike ray then $\gamma^x \subset
\bar{J}^{-}(z)$. If neither $\gamma^x$, nor $\gamma^z$ is a
lightlike ray then $z \in I^{+}(x)$. If ($\gamma^x$ is not a
lightlike ray or $\forall n$, $x_n=x$) and ($\gamma^z$ is not a
lightlike ray or $\forall n$, $z_n=z$) then $\limsup
l(\gamma_k^x)\le d(x,z)$. If the spacetime is non-totally
imprisoning  then the curves $\gamma_n$ are not all contained in a
compact.

\end{itemize}

If the curves $\gamma_k$ are not all contained in a compact or if
$l(\gamma_n) \to +\infty$ then $b=+\infty$.

If $x=z$ and $b=+\infty$ then $\gamma=\gamma^x \circ \gamma^z$ is an
inextendible limit (cluster) curve of $\gamma_n$, for every
$\tilde{x}, \tilde{z} \in \gamma$, it is $(\tilde{x},\tilde{z})\in
\bar{J}^{+}$ and strong causality is violated at every point of
$\gamma$. Moreover, if $\gamma$ is not a lightlike line then all but
a finite number of the curves $\gamma_k^x$ intersect the chronology
violating set.

If $x=z$ and $0<b<+\infty$ then $\gamma^x$ is a closed continuous
causal curve starting and ending at $x$. The curve $\gamma$ obtained
making infinite rounds over $\gamma^x$ is inextendible and causality
is violated at every point of $\gamma$. Moreover, either $\gamma$ is
a lightlike line or it is entirely contained in the chronology
violating set.

%

Finally whether $x=z$ or not, if $\gamma_n$ is limit maximizing then
both $\gamma^x$ and $\gamma^z$ are maximizing and if, moreover,
$b=+\infty$ then $l(\gamma^x)+l(\gamma^z) \le \limsup l(\gamma_n)$.
Thus, in this last case if $\limsup l(\gamma_n)<+\infty$ then
$\gamma^x$ and $\gamma^z$ are lightlike rays or one of them is an
incomplete timelike ray.

\item[(3)] [general case] (here $a$, $b$, $a_n$, $b_n$, $a_k$, $b_k$, $a^{(i)}$, $b^{(i)}$, may take an infinite
value.) Let $\gamma_n:[a_n,b_n] \to M$ be a sequence of continuous
causal curves parametrized with respect to $h$-length. Let
$\{y^{(i)}\}$, $\{i\}=G\subset \mathbb{N}$, be a non-empty and at
most numerable set of limit points for $\gamma_n$ (namely, every
neighborhood of $y^{(i)}$ intersect all but a finite number of the
curves $\gamma_n$). If $\limsup(b_n-a_n)>0$ then there is a
subsequence $\gamma_k$ of $\gamma_n$ such that $\lim (b_k-a_k)$
exists, there are sequences $t^{(i)}_k \in [a_k,b_k]$,
$\gamma_k(t^{(i)}_k)\to y^{(i)}$, the limits $\lim (a_k-t^{(i)}_k)$,
$\lim (b_k-t^{(i)}_k)$ exist, there are continuous causal curves
$\gamma^{(i)}: [a^{(i)},b^{(i)}] \to M$, $a^{(i)}=\lim
(a_k-t^{(i)}_k)$, $b^{(i)}=\lim (b_k-t^{(i)}_k)$,
$\gamma^{(i)}(0)=y^{(i)}$, such that the sequence $\gamma^{(i)}_k:
[a_k-t^{(i)}_k,b_k-t^{(i)}_k] \to M$, defined by
$\gamma^{(i)}_k(t)=\gamma_k(t+t^{(i)}_k)$, converges $h$-uniformly
on compact subsets to $\gamma^{(i)}$. The curves $\gamma^{(i)}$ are
past inextendible iff $a^{(i)}=-\infty$ and future inextendible iff
$b^{(i)}=+\infty$.

Given $i \ne j$, $i,j \in G$, there are only three possibilities.
Either $t^{(i)}_k-t^{(j)}_k \to c^{(ij)}=-c^{(ji)}=const.$, in which
case we write $(i) \sim (j)$, or $t^{(i)}_k-t^{(j)}_k \to \pm
\infty$. In the first case, $\gamma^{(i)}$ and $\gamma^{(j)}$ are
one the reparameterization of the other,
$\gamma^{(i)}(t)=\gamma^{(j)}(t+c^{(ij)})$. If $t^{(i)}_k-t^{(j)}_k
\to +\infty$ then for every $t^{(i)} \in [a^{(i)},b^{(i)}]$ and
$t^{(j)} \in [a^{(j)},b^{(j)}]$, there is $K(t^{(j)},t^{(i)})$ such
that for $k>K$, $\gamma^{(j)}_k(t^{(j)}) \le
\gamma^{(i)}_k(t^{(i)})$, in particular for every $x^{(j)} \in
\gamma^{(j)}$  and $x^{(i)} \in \gamma^{(i)}$, $(x^{(j)},x^{(i)})
\in \bar{J}^{+}$, and analogously with the roles of $i$ and $j$
inverted if $t^{(i)}_k-t^{(j)}_k \to -\infty$.

Finally, if $\gamma_n$ is limit maximizing then each $\gamma^{(i)}$
is maximizing and \[\sum_{G/\sim} l(\gamma^{(i)}) \le \limsup
l(\gamma_n).\]

\end{itemize}
\end{theorem}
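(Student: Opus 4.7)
The three cases share a common backbone: parametrize the curves by $h$-arclength so they become $h$-Lipschitz with constant $1$, apply Arzelà-Ascoli to extract a subsequence converging uniformly on compact subsets of the common limiting domain, and invoke Lemma \ref{mae} to conclude that the limit is continuous causal. Lemma \ref{nsf} then controls inextendibility via the boundedness of the domain, Theorem \ref{ups}(b) handles endpoint convergence, and Theorem \ref{maxs} handles the maximizing clause. The diagonal argument over compact subintervals $[-N,N]$ is the standard device that lets us upgrade pointwise precompactness to convergence on all compact subsets of the (possibly unbounded) limit interval. Cases (1) and (2) are really instances of (3), so the strategy is to prove (1) carefully, deduce (2) from it by applying it with $y=x$ (and then again with $y=z$ for the past-parametrized companion), and then build (3) by a diagonal argument over the countable family $\{y^{(i)}\}$.

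\textbf{Case (1).} Since $y$ is an accumulation point, by extracting a subsequence one can find $t_n\in[a_n,b_n]$ with $\gamma_n(t_n)\to y$; reparametrize by shifting the parameter to put this sample at $t=0$, so that $0\in[a_k,b_k]$ and $\gamma_k(0)\to y$. Pass to a further subsequence so that $a_k\to a\in[-\infty,0]$ and $b_k\to b\in[0,+\infty]$. The hypothesis that only finitely many $\gamma_k$ lie in some neighborhood $U$ of $y$ is equivalent to $b>0$ or $a<0$: if both were zero, $h$-length unity gives $\gamma_k([a_k,b_k])\subset B_\epsilon(y)$ eventually. Now apply Arzelà-Ascoli on $[-N,N]\cap[a,b]$ for each $N$ and diagonalize; Lemma \ref{mae} promotes the limit to a continuous causal curve $\gamma:[a,b]\to M$. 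Endpoint statements then follow: if $a>-\infty$ then $a_k\to a$ forces $\gamma_k(a_k)\to\gamma(a)$ by Theorem \ref{ups}(b); the implication ``$x_k\to\infty$ or $l(\gamma_k|_{[a_k,0]})\to+\infty$ forces $a=-\infty$'' follows from Hopf–Rinow together with Lemma \ref{vja}; inextendibility is Lemma \ref{nsf}; and the maximizing assertion is Theorem \ref{maxs}.

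\textbf{Case (2).} Parametrize each $\gamma_n$ by $h$-length starting at $x_n$, so that $\gamma^x_n:[0,b_n]\to M$ with $\gamma^x_n(0)=x_n\to x$ and $\gamma^x_n(b_n)=z_n\to z$. Apply case (1) with $y=x$ to extract a subsequence and limit curve $\gamma^x$; the reparametrization $\gamma^z_k(t)=\gamma^x_k(t+b_k)$ is the curve of case (1) applied at $y=z$. Pass to a further subsequence with $b_k\to b\in[0,+\infty]$. If $b<+\infty$ the two limits are the same curve up to translation, and connect $x$ to $z$; the upper-bound on length is Theorem \ref{ups}(a). If $b=+\infty$, both limit curves are inextendible; for any $t^x\in[0,+\infty)$ and $t^z\in(-\infty,0]$, $t^x\le b_k+t^z$ for large $k$, so $\gamma^x_k(t^x)\le\gamma^x_k(b_k+t^z)=\gamma^z_k(t^z)$, and passing to the limit with closedness of $\bar J^+$ gives $(\bar x,\bar z)\in\bar J^+$ for every $\bar x\in\gamma^x$, $\bar z\in\gamma^z$. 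The refinements about $\gamma^z\subset\bar J^+(x)$, $z\in I^+(x)$, and length bounds use that if $\gamma^x$ is not a lightlike ray then some subsegment is timelike, so after rounding a corner we can replace $\gamma^x_k(t^x)$ by a chronologically later point (and symmetrically for $\gamma^z$); the non-total imprisonment and $l(\gamma_n)\to+\infty$ corollaries follow from the length-limit clauses of case (1) applied to each end. The $x=z$ subcases are then easy: if $0<b<+\infty$ one gets a closed causal curve, whose infinite iteration produces an inextendible causal curve, and Theorem \ref{xxa} yields the dichotomy between a lightlike line and a curve in the chronology-violating set.

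\textbf{Case (3) and main obstacle.} Apply case (1) in turn to each $y^{(i)}$: for $i=1$ extract a subsequence and a recentering $t^{(1)}_k$; for $i=2$ extract a further subsequence and recentering $t^{(2)}_k$; iterate and take the diagonal subsequence so that simultaneously every $\gamma^{(i)}_k(t)=\gamma_k(t+t^{(i)}_k)$ converges $h$-uniformly on compact subsets to some $\gamma^{(i)}$. For each ordered pair $(i,j)$ the differences $t^{(i)}_k-t^{(j)}_k$ have a subsequential limit in $[-\infty,+\infty]$; refining the subsequence through a second diagonal over countably many pairs we may assume this limit exists for every pair. If finite, equal to $c^{(ij)}$, then $\gamma^{(i)}_k(t)=\gamma^{(j)}_k(t+t^{(i)}_k-t^{(j)}_k)$ and the uniform-convergence-on-compacta estimate forces $\gamma^{(i)}(t)=\gamma^{(j)}(t+c^{(ij)})$; if $+\infty$, the comparison $t^{(j)}+t^{(j)}_k<t^{(i)}+t^{(i)}_k$ holds eventually and gives $\gamma^{(j)}_k(t^{(j)})\le\gamma^{(i)}_k(t^{(i)})$, which passes to $\bar J^+$ in the limit. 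The length estimate comes by choosing in each equivalence class $G/\!\!\sim$ a representative and noticing that, for any finite subcollection of classes ordered by the limits of $t^{(i)}_k$, the corresponding disjoint subarcs of $\gamma_n$ have total length bounded above by $l(\gamma_n)$ on any given compact window; Theorem \ref{ups}(a) on each window, Fatou's lemma on the sum of $l(\gamma^{(i)}|_{\text{windows}})$, and monotone convergence as the windows exhaust $[a^{(i)},b^{(i)}]$ then give $\sum_{G/\sim}l(\gamma^{(i)})\le\limsup l(\gamma_n)$. The main obstacle is precisely this countable diagonal bookkeeping combined with the verification that the ordering of the $t^{(i)}_k$'s is compatible with the ordering of the disjoint subarcs one sums over, and with the careful handling of the boundary cases in (2) where $\gamma^x$ or $\gamma^z$ is a lightlike ray (so that one cannot round a corner to enter $I^+$).
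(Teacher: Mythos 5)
Your overall architecture --- $h$-length parametrization, Arzel\`a--Ascoli extraction on an exhaustion by compact windows, Lemma \ref{mae} to certify that the limit is continuous causal, Lemma \ref{nsf} for inextendibility, Theorem \ref{ups}(b) for endpoint convergence, Theorem \ref{maxs} for the maximizing clauses, and a Cantor diagonal over the $y^{(i)}$ in case (3) --- is the same as the paper's; the only cosmetic difference is that the paper first extends each curve to an inextendible one and invokes the limit curve lemma on $(-\infty,+\infty)$, restricting afterwards, while you run Arzel\`a--Ascoli directly on the truncated domains. There is, however, a genuine gap in case (1): you prove only one direction of the parenthetical ``iff''. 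You show that $a=b=0$ forces the curves eventually into any ball around $y$, but not the converse, namely that $b>0$ or $a<0$ produces a neighborhood $U$ of $y$ containing only finitely many $\gamma_k$. This is not automatic --- a causal curve of $h$-length $\ge\epsilon$ could a priori wind around inside an arbitrarily small neighborhood --- and the paper needs Lemma \ref{mas} here: inside a small globally hyperbolic chart the $h$-length of a confined causal curve is bounded by $K$ times the variation of $x^0$, so shrinking the chart until the range of $x^0$ is below $K^{-1}\epsilon/2$ excludes every curve of $h$-length $\ge\epsilon$. Your proposal never invokes Lemma \ref{mas}, and the same bound is what makes the analogous iff in case (2) work.

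A second substantive omission is the subcase $x=z$, $b=+\infty$ of (2), which you skip entirely and which carries real content. First, the claim $(\tilde{x},\tilde{z})\in\bar{J}^{+}$ for \emph{every} pair on the glued curve $\gamma=\gamma^x\circ\gamma^z$ needs a separate argument when the two parameters lie on the same half of $\gamma$: one passes through a third point $\tilde{x}'=\gamma(t_1')$ with $t_1'>0$ and uses openness of $I^{+}$ to chain $(\tilde{x}',\tilde{z})\in\bar{J}^{+}$ back to $\tilde{x}$. Second, the claim that if $\gamma$ is not a lightlike line then all but finitely many $\gamma^x_k$ meet the chronology violating set requires combining $\gamma^z_k(t^z)\ll\gamma^x_k(t^x)$ (from convergence and openness of $I^{+}$) with the causal relation $\gamma^x_k(t^x)\le\gamma^x_k(t^z+b_k)=\gamma^z_k(t^z)$ along $\gamma^x_k$ itself, producing a point $y$ on $\gamma^x_k$ with $y\ll y$. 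Neither idea appears in your sketch, nor does the resulting failure of strong causality along $\gamma$. The remaining compressions (the proof of $\limsup l(\gamma^x_k)\le d(x,z)$ via the reverse triangle inequality and lower semi-continuity of $d$, and the disjoint-windows argument for $\sum l(\gamma^{(i)})\le\limsup l(\gamma_n)$) are in the right direction and fillable from Theorems \ref{ups} and \ref{maxs}.
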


%

\begin{proof}
The starting point is a limit curve lemma \cite{galloway86b}
\cite[Lemma 14.2]{beem96} whose proof is contained in the proof of
\cite[Prop. 3.31]{beem96}.
\begin{quote}
(Limit curve lemma) Let $\gamma_n: (-\infty,+\infty) \to M$,  be a
sequence of inextendible continuous causal curves parametrized with
respect to $h$-length, and suppose that $y\in M$ is an accumulation
point of the sequence $\gamma_n(0)$. There is  a  inextendible
continuous causal curve $\gamma: (-\infty,+\infty)  \to M$, such
that $\gamma(0)=y$ and a subsequence $\gamma_k$ which converges
$h$-uniformly on compact subsets to $\gamma$.
\end{quote}
The concept of $h$-uniform convergence on compact subsets used in
\cite{beem96} is equivalent to that introduced in definition
\ref{vga} if the curves of the sequence and the limit curve are
defined in $(-\infty,+\infty)$. Thus the limit curve lemma holds
also with the notations and definitions of this work. The idea is to
 extend each curve of the sequence into an inextendible
continuous causal curve, apply the limit curve lemma and show that
the limit curve, restricted to a suitable domain, is a limit curve
for the unextended sequence.
\begin{itemize}
\item Proof of (1). Parametrize the sequence with respect to
$h$-length and pass to a subsequence to get $\gamma_n: [a_n,b_n] \to
M$, $0\in [a_n,b_n]$, $\gamma_n(0) \to y$. Pass to a subsequence
$\gamma_i$ so that there are $a\le 0$, and $b\ge 0$, such that $a_i
\to a$, $b_i \to b$. If there is a neighborhood $U$ of $y$ such that
only a finite number of $\gamma_i$ is contained in $U$, then since
$\gamma_i(0)\to y$ their $h$-lengths are bounded from below by a
positive constant $\epsilon>0$, thus $b_i-a_i>\epsilon$ and finally
$b-a>\epsilon$, thus $b > 0$ or $a <0$. Conversely, if $b-a>0$ then
the $h$-lengths of $\gamma_i$ are bounded from below by a positive
constant $\epsilon$. Let $V\ni y$ be the globally hyperbolic
neighborhood of lemma \ref{mas}, if $\gamma_i$ is contained in $V$
then $\epsilon\le l_0(\gamma_i\vert_{[a_i,b_i]}) \le K \vert
x^{0}(\gamma_i(b_i))-x^{0}(\gamma_i(a_i))\vert$. Let $U\subset V$ be
such that the range of function $x^0$ on $\bar{U}$ is less than $
K^{-1}\epsilon/2$, then no $\gamma_i$ can be contained in $U$, and
in particular only a finite number of $\gamma_i$ is contained in
$U$.

Extend arbitrarily the curves (for instance, if $b_i<+\infty$, use
the exponential map at $\gamma_i(b_i)$ to join $\gamma_i$ with a
future inextendible causal geodesic)  so as to obtain a sequence of
inextendible continuous causal curves $\tilde{\gamma}_i:
(-\infty,+\infty) \to M$, parametrized with respect to $h$-length.
 Apply the limit curve
lemma and infer the existence of a continuous causal limit curve
$\tilde{\gamma}: (-\infty,+\infty) \to M$, to which a  subsequence
$\tilde{\gamma}_k$ of $\tilde{\gamma}_i$ converges $h$-uniformly on
compact subsets. Define $\gamma=\tilde{\gamma}\vert_{[a,b]}$, then
it  follows from the definition of $h$-uniform convergence  that the
subsequence $\gamma_k$ of $\gamma_i$ converges $h$-uniformly on
compact subset to $\gamma$.

If $a >-\infty$, since $a_k \to a$, then all but a finite number of
$a_k$ satisfies $a_k>-\infty$, which becomes all passing to a
subsequence if necessary. The sequence $x_k=\gamma_k(a_k)$ converges
to $\gamma(a)$ as a result of theorem \ref{ups} case (2). It is
clear that if all but a finite number of $a_k$ are finite then
$a=-\infty$, however, even if $a_k >-\infty$ it is possible to infer
that necessarily $a=-\infty$. This happens if $x_k=\gamma_k(a_k) \to
+\infty$, because then since $\gamma_k(0) \to y$, $d_0(x_k,y) \to
+\infty$ and hence $-a_k=l_0(\gamma_k\vert_{[a_k,0]}) \to +\infty$.
Also if $l(\gamma_k\vert_{[a_k,0]}) \to +\infty$ then $a=-\infty$,
as it follows from lemma \ref{vja} case (ii). Analogous arguments
hold in the future case.

Next $\tilde{\gamma}$ is future inextendible thus $\gamma$ is future
inextendible if $b=+\infty$ otherwise it has future endpoint. The
fact that $\gamma_k$ are future inextendible iff $b_k=+\infty$
follows from lemma \ref{nsf}. Analogous arguments hold in the past
case. The last statement follows from theorem \ref{maxs}.

\item Proof of (2). Parametrize the sequence with respect to
$h$-length to get $\gamma^x_n: [0,b_n] \to M$, $x_n=\gamma^x_n(0)
\to x$, $z_n=\gamma^x_n(b_n) \to z$. Pass to a subsequence
$\gamma^x_i$ so that there is $b\ge 0$, such that $b_i \to b$. If
there is a neighborhood $U$ of $x$ such that only a finite number of
$\gamma^x_i$ is contained in $U$, then since $\gamma^x_i(0)\to x$
their $h$-lengths are bounded from below by a positive constant
$\epsilon>0$, thus $b_i>\epsilon$ and finally $b>\epsilon>0$.
Conversely, if $b>0$ then the $h$-lengths of $\gamma^x_i$ are
bounded from below by a positive constant $\epsilon$. Let $V\ni x$
be the globally hyperbolic neighborhood of lemma \ref{mas}, if
$\gamma^x_i$ is contained in $V$ then $\epsilon\le
l_0(\gamma^x_i\vert_{[0,b_i]}) \le K \vert
x^{0}(\gamma^x_i(b_i))-x^{0}(\gamma^x_i(0))\vert$. Let $U\subset V$
be such that the range of function $x^0$ on $\bar{U}$ is less than $
K^{-1}\epsilon/2$, then no $\gamma^x_i$ is contained in $U$, and in
particular only a finite number of $\gamma^x_i$ is contained in $U$.
Clearly, if $z \ne x$ then there is $U\ni x$, $z \notin U$, such
that only a finite number of $\gamma^x_i$ is contained in $U$. If
$x_n=z_n$ take $U\ni x$ convex and hence causal, then $\gamma^x_i$
necessarily escape $U$ otherwise they would violate causality.

Extend arbitrarily the curves so as to obtain a sequence of
continuous causal curves $\tilde{\gamma}^x_i: (-\infty,+\infty) \to
M$, parametrized with respect to $h$-length. By lemma \ref{nsf} they
are inextendible. Apply the limit curve lemma and infer the
existence of a continuous causal limit curve $\tilde{\gamma}^x:
(-\infty,+\infty) \to M$, to which a subsequence
$\tilde{\gamma}^x_j$ of $\tilde{\gamma}^x_i$ converges $h$-uniformly
on compact subsets. Define $\gamma^x=\tilde{\gamma}^x\vert_{[0,b]}$,
then it is follows from the definition of $h$-uniform convergence
that the subsequence $\gamma^x_j$ of $\gamma^x_i$ converges
$h$-uniformly on compact subsets to $\gamma^x$.

Repeat the argument for $\gamma^z_j(t)=\gamma^x_j(t+b_j)$, and find
the existence of the continuous causal curve $\gamma^z: [-b,0] \to
M$, to which the subsequence $\gamma^z_k$ of $\gamma^z_j(t)$
converges $h$-uniformly on compact subsets (clearly $\gamma^x_k$
converges $h$-uniformly on compact subset to $\gamma^x$).

If $b<+\infty$, by remark \ref{rem} $\gamma^z_k$ converges
$h$-uniformly to $\gamma^z$. Given $t \in [0,b]$,
$\gamma^x(t)=\lim_{k \to +\infty} \gamma^x_k(t)=\lim_{k \to +\infty}
\gamma_k^z(t-b_k)$, but $t-b_k \to t-b \in [-b,0]$, and using point
(b) of theorem \ref{ups}, $\lim_{k \to +\infty}
\gamma_k^z(t-b_k)=\gamma^z(t-b)$, thus $\gamma^x(t)=\gamma^z(t-b)$,
and $\gamma^x$ and $\gamma^z$ are one the reparameterization of the
other. In particular $\limsup l(\gamma^x_k) \le l(\gamma^x)$ by
theorem \ref{ups}.

If $b=+\infty$ since $\tilde{\gamma}^x$ is future inextendible and
$\gamma^x=\tilde{\gamma}^x\vert_{[0,+\infty)}$ then $\gamma^x$ is
future inextendible. An analogous argument holds for $\gamma^z$,
which can be written $\gamma^z=\tilde{\gamma}^z
\vert_{(-\infty,0]}$, where $\tilde{\gamma}^z$ is past inextendible.
Let $t^x \in [0,+\infty)$, $t^z \in (-\infty,0]$, there is a
constant $K>0$ such that for $k>K$, $b_k>t^x-t^z$, thus
$\gamma^z_k(t^z)=\gamma^x_k(t^z+b_k) \ge \gamma^x_k(t^x)$, in
particular passing to the limit $k \to +\infty$, and using the
pointwise convergence
\begin{equation}
(\gamma^x(t^x),\gamma^z(t^z)) \in \bar{J}^{+}.
\end{equation}
 If $\gamma^x$ is not a lightlike ray
then there is $t \in [0,+\infty)$ such that $\gamma^x(t) \in
I^{+}(x)$, but then for every $t' \in (-\infty,0]$,
$(\gamma^x(t),\gamma^z(t')) \in \bar{J}^{+}$, and since $I^{+}(x)$
is open, $\gamma^z(t') \in \bar{J}^{+}(x)$, and analogously in the
other case. Finally, consider the case in which neither $\gamma^x$
nor $\gamma^z$ are lightlike lines. There are $t^x \in [0,+\infty)$
and $t^z \in (-\infty,0]$ such that $\bar{x}=\gamma^x(t^x) \in
I^{+}(x)$ and $\bar{z}=\gamma^z(t^z) \in I^{-}(z)$, thus since
$(\bar{x},\bar{z}) \in \bar{J}^{+}$ and $I^+$ is open, $z \in
I^{+}(x)$.

It is clear that if $x_n=x$ and $z_n=z$, $d(x,z) \ge \limsup
l(\gamma^x_k)$. The case in which $x_n=x$ and $\gamma^z$ is not a
lightlike ray or the case in which $\gamma^x$ is not a lightlike ray
and $z_n=z$ is simpler than the last case in which both $\gamma^x$
and $\gamma^z$ are not lightlike rays. I am going to give the proof
of $d(x,z) \ge \limsup l(\gamma^x_k)$ in this last case. Define
$\bar{x}_k= \gamma^x_k(t^x)$, $\bar{z}_k=\gamma^z_k(t^z)$. For large
enough $k$, $\bar{x}_k \in I^{+}(x)$, $\bar{z}_k \in I^{-}(z)$ and
$\bar{x}_k\le \bar{z}_k$. The reverse triangle inequality gives
$d(x,z)\ge d(x,\bar{x}_k)+d(\bar{x}_k,\bar{z}_k)+d(\bar{z}_k,z)$. If
$d(x,z)=+\infty$ the inequality $d(x,z) \ge \limsup l(\gamma^x_k)$
is obvious. Assume $d(x,z)<+\infty$, then for sufficiently large
$k$, $d(x,\bar{x}_k)<+\infty$, $d(\bar{x}_k,\bar{z}_k)<+\infty$ and
$d(\bar{z}_k,z)<+\infty$. By the lower semi-continuity of the
distance, $d(x,\bar{x})<+\infty$ otherwise $\liminf
d(x,\bar{x}_k)=+\infty$ and hence $d(x,z)=+\infty$. Analogously, it
is also $d(\bar{z},z)<+\infty$. Note that
\begin{align*}
d(\bar{x}_k,\bar{z}_k) \ge &  \,l(\gamma^x_k\vert_{[t^x,t^z+b_k]})=
l(\gamma^x_k)-l(\gamma^x_k\vert_{[0,t^x]})-l(\gamma^x_k\vert_{[t^z+b_k,b_k]})\\
&\
=l(\gamma^x_k)-l(\gamma^x_k\vert_{[0,t^x]})-l(\gamma^z_k\vert_{[t^z,0]}).
\end{align*}
Given $\epsilon>0$, use the uniform convergence on compact subsets
of $\gamma^x_k$ and $\gamma^z_k$, the upper semi-continuity of the
length functional, and the lower semi-continuity of the distance to
obtain for sufficiently large $k$
\begin{align*}
l(\gamma^x_k\vert_{[0,t^x]})&\le
l(\gamma^x\vert_{[0,t^x]})+\epsilon\le d(x,\bar{x})+\epsilon\le d(x,\bar{x}_k)+2\epsilon,\\
l(\gamma^z_k\vert_{[t^z,0]})&\le
l(\gamma^z\vert_{[t^z,0]})+\epsilon\le d(\bar{z},z)+\epsilon\le
d(\bar{z}_k,z)+2\epsilon.
\end{align*}
Putting everything together gives $d(x,z)\ge
l(\gamma^x_k)-4\epsilon$. Taking the limsup and using the the
arbitrariness of $\epsilon$,  $d(x,z) \ge \limsup l(\gamma^x_k)$. If
the spacetime is non-total imprisoning and $b=+\infty$ then the
future inextendible curve $\gamma^x$ escapes every compact and thus
the curves $\gamma_n$ can't all be contained in a compact.

Assume that $\gamma_k$ are not entirely contained in a compact then
since $\gamma^x_k(0) \to x$, there must be a subsequence whose
$h$-length goes to infinity thus it can't be $b_k \to b<+\infty$,
and hence $b=+\infty$. If $l(\gamma_n) \to +\infty$ then
$l(\gamma^x_k) \to +\infty$ and $b=+\infty$ follows from point (ii)
of lemma \ref{vja}.

If $x=z$ and $b=+\infty$ then $\gamma=\gamma^x \circ
\gamma^z:(-\infty,+\infty) \to M$ is clearly inextendible and it is
a limit curve of $\gamma_n$ because both $\gamma^x$ and $\gamma^z$
are limit curves. Given $t_1,t_2\in\mathbb{R}$, let
$\tilde{x}=\gamma(t_1)$ and $\tilde{z}=\gamma(t_2)$. If $t_1 \le
t_2$, clearly $(\tilde{x},\tilde{z})\in J^{+}\subset \bar{J}^{+}$.
If $t_1>t_2$, $t_1\ge 0$ and $t_2 \le 0$ then $\tilde{x} \in
\gamma^x$, $\tilde{z}\in \gamma^z$, and it has been already
established that $(\tilde{x},\tilde{z})\in \bar{J}^{+}$. There
remain the cases $t_1>t_2$, $t_1,t_2\le 0$ and $t_1>t_2$,
$t_1,t_2\ge 0$. I consider the former case the other being similar.
Take $t'_1 >0$, $\tilde{x}'=\gamma(t_1')$, so that
$(\tilde{x}',\tilde{z}) \in \bar{J}^{+}$. For every $w \in
I^{-}(\tilde{x})$, since $(\tilde{x},\tilde{x}') \in J^{+}$, it is
$(w,\tilde{x}') \in I^{+}$, but $I^{+}$ is open, thus $\tilde{z} \in
\bar{J}^{+}(w)$ and since $w$  can be taken arbitrarily close to
$\tilde{x}$, $(\tilde{x},\tilde{z}) \in \bar{J}^{+}$.

In particular, given $\tilde{x} \in \gamma$, take $\tilde{z} \in
\gamma$, $\tilde{z}\ne \tilde{x}$, such that $(\tilde{x},\tilde{z})
\in J^{+}$, but then it is also $(\tilde{z},\tilde{x}) \in
\bar{J}^{+}$ which implies that strong causality does not hold at
$\tilde{x}$ (see, for instance, theorem 3.4 of \cite{minguzzi07b}).

If $\gamma$ is not a lightlike line then there are $t^z \in
(-\infty,0]$ and $t^x \in [0,+\infty)$ such that $\gamma(t^z) \ll
\gamma(t^x)$, which reads $\gamma^z(t^z) \ll \gamma^x(t^x)$ and for
sufficiently large $k$, $\gamma^z_k(t^z) \ll \gamma^x_k(t^x)$. It
has been proved that for $k
>K(t^x,t^z)$, it is $b_k>t^x-t^z$. Consider the continuous causal curve
$\gamma^x_k\vert_{[t^x,t^z+b_k]}$ of endpoints $\gamma^x_k(t^x)$ and
$\gamma^x_k(t^z+b_k)=\gamma^z_k(t^z)$. Clearly,
$y=\gamma^x_k(t^x+b_k/2)$ is such that $y\ll y$ which is impossible
if the curves $\gamma_k^x$ do not intersect the chronology violating
set of $(M,g)$.


If $b<+\infty$ then $\gamma^x:[0,b] \to M$, is such that
$\gamma^x(0)=\gamma^x(b)$ thus making infinite rounds over
$\gamma^x$ it is possible to obtain an inextendible continuous
causal curve $\gamma$ passing infinitely often through $x$. If
$\gamma$ is not a lightlike line then it is contained in the
chronology violating se by theorem \ref{xxa}.

If $\gamma_n$ is limit maximizing then $\gamma^x_k$ and $\gamma^z_k$
are limit maximizing and thus $\gamma^x$ and $\gamma^z$ are
maximizing (theorem \ref{maxs}). If $b=+\infty$, by the same
theorem, given $t^x \in [0,+\infty)$ and $t^z \in (-\infty,0]$
\begin{align*}
\lim_{k \to +\infty} &l(\gamma^x_k\vert_{[0,t^x]})= l(\gamma^x
\vert_{[0,t^x]}),\\
 \lim_{k \to +\infty} &
l(\gamma^z_k\vert_{[t^z,0]})= l(\gamma^z \vert_{[t^z,0]}).
\end{align*}
Thus given $\epsilon>0$, it is for sufficiently large $k$,
\[l(\gamma^x \vert_{[0,t^x]})+l(\gamma^z \vert_{[t^z,0]})\le
l(\gamma^x_k\vert_{[0,t^x]})+l(\gamma^z_k\vert_{[t^z,0]})+\epsilon.\]
Note that $\gamma^z_k\vert_{[t^z,0]}$ is a reparameterization of
$\gamma^x_k\vert_{[t^z+b_k,b_k]}$. But for sufficiently large $k$,
$b_k>t^x-t^z$ and thus $[0,t^x]\cap[t^z+b_k,b_k]=\emptyset$. As a
consequence,
$l(\gamma^x_k\vert_{[0,t^x]})+l(\gamma^z_k\vert_{[t^z,0]})\le
l(\gamma^x_k)$, and taking the limit
\[
l(\gamma^x \vert_{[0,t^x]})+l(\gamma^z \vert_{[t^z,0]})\le \limsup
l(\gamma_n)+\epsilon
\]
From the arbitrariness of $t^x$, $t^y$ and $\epsilon$ the thesis
follows. The last statement is obvious.

\item Proof of (3). If $\liminf(b_n-a_n)>0$ it is possible to find a subsequence $\gamma_i$ such that
$\lim(b_i-a_i)$ exists and is greater than zero. The subsequence can
be chosen so that $\gamma_i(t^{(1)}_i) \to y^{(1)}$ for suitable
$t^{(1)}_i \in [a_i,b_i]$. It is also possible to assume that $\lim
(a_i-t_i^{(1)})$ and $\lim (b_i-t^{(1)}_i)$ exist (otherwise pass to
another subsequence denoted in the same way).  Extend arbitrarily
the curves to get inextendible continuous causal curves
$\tilde{\gamma}_i$. Translate their domain to get a sequence
$\tilde{\gamma}^{(1)}_i(t)=\tilde{\gamma}_i(t+t^{(1)}_i)$ and apply
the limit curve lemma to infer the existence of
$\tilde{\gamma}^{(1)}$ to which a subsequence
$\tilde{\gamma}^{(1)}_{k^{(1)}}$ of $\tilde{\gamma}^{(1)}_i$
converges uniformly on compact subsets. Define
\[\gamma^{(1)}=\tilde{\gamma}^{(1)}\vert_{[\lim(a_{k^{(1)}}-t^{(1)}_{k^{(1)}}),
\lim(b_{k^{(1)}}-t^{(1)}_{k^{(1)}} )]},\] then the sequence
$\gamma^{(1)}_{k^{(1)}}(t)=\gamma_{k^{(1)}}(t+t^{(1)}_{k^{(1)}})$
obtained translating the domains of the subsequence
$\gamma_{k^{(1)}}$ of $\gamma_n$, converges $h$-uniformly on compact
subsets to $\gamma^{(1)}$. Note that since $\gamma_i(t^{(1)}_i) \to
y^{(1)}$ it is $\gamma_{k^{(1)}}(t^{(1)}_{k^{(1)}}) \to y^{(1)}$.

Repeat the argument for $y^{(2)}$, but this time starting from
$\gamma_{k^{(1)}}$ instead from $\gamma_n$. The result is the
existence of a subsequence $\gamma_{k^{(2)}}$ of $\gamma_{k^{(1)}}$,
and of a sequence $t^{(2)}_{k^{(2)}}\in [a_{k^{(2)}},b_{k^{(2)}}]$
such that the limits $\lim(a_{k^{(2)}}-t^{(2)}_{k^{(2)}})$,
$\lim(b_{k^{(2)}}-t^{(2)}_{k^{(2)}} )$ exist,
$\gamma_{k^{(2)}}(t^{(2)}_{k^{(2)}}) \to y^{(2)}$, and the
translated subsequence
$\gamma_{k^{(2)}}^{(2)}(t)=\gamma_{k^{(2)}}(t+t^{(2)}_{k^{(2)}})$
converges $h$-uniformly on compact subsets to a continuous causal
curve $\gamma^{(2)}$ of domain
$[\lim(a_{k^{(2)}}-t^{(2)}_{k^{(2)}}),
\lim(b_{k^{(2)}}-t^{(2)}_{k^{(2)}} )]$.

Continue in this way for every $(i)$ so as to obtain a sequence of
subsequences of $\gamma_n$: $\gamma_{k^{(1)}}$, $\gamma_{k^{(2)}}$,
$\ldots$ with analogous properties. Apply a Cantor diagonal process,
namely, construct the new sequence $\gamma_k$ as follows. Define
$\gamma_1$ to be the first  curve of $\gamma_{k^{(1)}}$, define
$\gamma_2$ to be the second curve of $\gamma_{k^{(2)}}$, and so on.
In this way $\gamma_k$ is a subsequence of  $\gamma_{k^{(i)}}$ for
every $(i)$.

All the other statements of case (3) have analogous proofs in (1) or
(2). For instance,  the inequality $\sum^{\infty}_i l(\gamma^{(i)})
\le \limsup l(\gamma_n)$, follows from  $\sum^{N}_i l(\gamma^{(i)})
\le \limsup l(\gamma_n)$ which is proved in a way completely
analogous to case (2). In the last step  the arbitrariness of $N$ is
used.

\end{itemize}
\end{proof}
%
%
%
%


\section{Some consequences}

In this section some unpublished consequences of the limit curve
theorem are explored.

\begin{remark}
Observe that given a causal curve $\gamma:I \to M$  the function
$s(t_1,t_2)=d(\gamma(t_1),\gamma(t_2))-l(\gamma \vert_{[t_1,t_2]})$,
$s: I\times I \to \mathbb{R}$, is, by the reverse triangle
inequality, non-decreasing in $t_2$ and non-increasing in $t_1$ and
it is lower semi-continuous in $t_1$ and $t_2$. As a consequence if
$s(t_1,t_2)>0$ then the same is true in a neighborhood of $(t_1,t_2)
\in I\times I$. In particular if $\gamma$ is maximizing in $(a',b')
\subset I$, then there is a (non unique) closed maximal interval
$[a,b]\supset (a',b')$ in which $\gamma$ is maximizing ($a$ and $b$
can be infinite). Therefore it is natural to assume that the domain
of definition of a maximizing causal curve is a closed set. Indeed,
if not it can be prolonged as a geodesic to get a maximizing causal
curve defined on a closed set.
\end{remark}

Newman \cite{newman90} \cite[Prop. 4.40]{beem96} proved that given a
maximizing timelike segment $c: [a,b] \to M$, the spacetime $(M,g)$
is strongly causal at $(a,b)$.
Here a  short proof is given. Moreover, the result is extended to
maximizing lightlike segments.

A lemma is needed (it generalizes \cite[Lemma 4.38]{beem96})

\begin{lemma} \label{mod}
Let $c: [a,b] \to M$ be a maximizing causal curve   on the spacetime
$(M,g)$, then either causality holds at $c(t)$ for every $t \in
[a,b]$ or $c$ can be obtained from the domain restriction of a
closed lightlike line $\gamma$ (closed here means that there are
$t_1$ and $t_2$ in the domain of $\gamma$ such that
$\gamma(t_1)=\gamma(t_2)$ and $\gamma'(t_1) \propto \gamma'(t_2)$).
\end{lemma}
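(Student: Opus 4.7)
\medskip
\noindent\emph{Proof sketch.} The plan is to suppose causality fails at some $c(t_0)\in [a,b]$ and to derive that $c$ must then be a lightlike segment of a closed lightlike line. Accordingly I would pick a closed causal curve $\sigma:[0,T]\to M$ with $\sigma(0)=\sigma(T)=c(t_0)$ and apply theorem \ref{xxa} to $\sigma$. Its case (ii)---namely that $\sigma$ lies entirely in the chronology violating set---can be ruled out at once: it would produce a closed timelike curve $\tau$ through $c(t_0)$ with $l(\tau)>0$, so that the concatenation of $c\vert_{[a,t_0]}$, $\tau$ and $c\vert_{[t_0,b]}$ would be a causal curve from $c(a)$ to $c(b)$ of Lorentzian length $l(c)+l(\tau)>l(c)=d(c(a),c(b))$, contradicting the definition of $d$. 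Hence only case (i) can occur: $\sigma$ is a null curve (so $l(\sigma)=0$), $\sigma'(0)\propto \sigma'(T)$, and $\sigma$ is obtained from a domain restriction of a closed lightlike line $\eta$.

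Next I would form the analogous causal concatenation $\tilde c$ of $c\vert_{[a,t_0]}$, $\sigma$ and $c\vert_{[t_0,b]}$. Since $l(\sigma)=0$ one has $l(\tilde c)=l(c)=d(c(a),c(b))$, so $\tilde c$ is itself maximizing. Because a maximizing causal curve is a geodesic without conjugate interior points (noted in the paper right after the definition of ``maximizing''), $\tilde c$ can carry no interior corner; in particular the tangent vectors at $t_0$ must align, $c'(t_0)\propto \sigma'(0)$. Since $\sigma'(0)$ is null, $c'(t_0)$ is null and $c$, being a geodesic, is lightlike throughout its domain. By uniqueness of geodesics with prescribed initial point and tangent direction, the maximal geodesic extension of $c$ then coincides (up to affine reparametrization) with $\eta$ in a neighbourhood of $c(t_0)$, and hence as subsets of $M$. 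Therefore $c$ itself is obtained from a domain restriction of the closed lightlike line $\eta$, which is the second alternative asserted by the lemma.

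The step I expect to require most care is the ``no interior corner'' argument in the purely lightlike regime. One must verify that if $c'(t_0)\not\propto \sigma'(0)$ then points of $\tilde c$ on opposite sides of the junction are chronologically related, so that a small subarc of $\tilde c$ around $t_0$ has zero Lorentzian length but strictly positive $d$-distance between its endpoints, violating the maximizing property of $\tilde c$. Once this corner-rounding observation is in place, the remainder is direct bookkeeping with geodesic uniqueness and theorem \ref{xxa}.
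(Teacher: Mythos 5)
Your argument is correct and follows essentially the same route as the paper's: both apply Theorem \ref{xxa} to a closed causal curve through the point where causality fails, rule out the chronology-violating alternative using the maximality of $c$ (the paper phrases this as $x\notin I^{+}(x)$, you phrase it via the length of the concatenation with a closed timelike curve), and then use the fact that a maximizing causal curve can have no corner to align the tangents and conclude by geodesic uniqueness. The only cosmetic difference is that you insert the entire null loop into $c$ and apply ``maximizing $\Rightarrow$ geodesic'' to the full concatenation $\tilde c$, whereas the paper reaches the same no-corner conclusion by showing that the two half-concatenations through an intermediate point $z$ of the loop fail to be maximizing, so the strict inequalities $d(c(a),z)>l(c\vert_{[a,t]})$ and $d(z,c(b))>l(c\vert_{[t,b]})$ contradict $d(c(a),c(b))=l(c)$.
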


\begin{proof}

Assume causality does not hold at $x=c(t)$, $t \in [a,b]$, then
there is $z$ such that $x<z<x$. Let $\eta_1:[d,e] \to M$ be the
causal curve connecting $x$ to $z$ and let $\eta_2:[e,f] \to M$ be
the causal curve connecting $z$ to $x$. Define $\eta=\eta_2 \circ
\eta_1$. Since $c$ is maximizing, $x \notin I^{+}(x)$, and thus the
causal curve $\eta: [d,f] \to M$  which connects $x$ to $z$ and then
$z$ to $x$ is a maximizing lightlike geodesic (see theorem
\ref{xxa}) with $\eta'(d) \propto \eta'(f)$. Assume that $\eta'(d)$
is not proportional to ${c}'(t)$. If $t\ne a$ it is $d(c(a),z)
> l(c\vert_{[a,t]})$ because $\eta_1\circ c\vert_{[a,t]}$ connects
$c(a)$ to $z$  and has length $l(c\vert_{[a,t]})$ but it is not a
geodesic and hence it is not maximizing. If $t\ne b$ it is
$d(z,c(b))
> l(c\vert_{[t,b]})$ because $ c\vert_{[t,b]}\circ \eta_2$ connects
 $z$ to $c(b)$  and has length $l(c\vert_{[t,b]})$ but it is not a geodesic
and hence it is not maximizing.
For every $t \in[a,b]$
\[
d(c(a),c(b))\ge d(c(a),z)+ d(z,c(b))>
l(c\vert_{[a,t]})+l(c\vert_{[t,b]})=l(c) \] and $c$ would not be
maximizing. Thus $\eta'(d)$ ( $=\eta'(f)$) is  proportional to
${c}'(t)$ which implies that making many rounds over $\eta$ a
inextendible maximizing curve $\gamma$ can be obtained. It also
implies, since the solution to the geodesic equation is unique, that
$c$ is obtained from the suitably parametrized curve $\gamma$
through a restriction of its domain.
\end{proof}

\begin{theorem} \label{str}
Let $c: [a,b] \to M$ be a maximizing causal curve on the spacetime
$(M,g)$, then there are the following possibilities
\begin{itemize}
\item[(i)]  $c$ is timelike and $(M,g)$ is strongly causal at $c(t)$ for every
$t \in (a,b)$.
\item[(ii)]  $c$ is lightlike and one of the following
possibilities holds
\begin{enumerate}
\item $(M,g)$ is strongly causal at   $c(t)$ for every
$t \in (a,b)$.
\item
Strong causality is violated at every point of $c$, and given
$t_1,t_2 \in [a,b]$, $t_1<t_2$, it is $(c(t_2),c(t_1)) \in
\bar{J}^{+}$.
\item $c$ intersects the closure of the chronology violating set at some point $x=c(t)$,
$t \in (a,b)$. Moreover, all the points in $c((a,b))$ at which
strong causality is violated belong to the closure of the chronology
violating set. In particular, $(M,g)$ is not chronological.
\end{enumerate}
\end{itemize}

\end{theorem}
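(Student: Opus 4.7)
My plan is to combine the dichotomy of Lemma \ref{mod} with the limit-curve Theorem \ref{main}(2) and a corner-rounding argument. First, Lemma \ref{mod} tells me that either causality holds at every point of $c$ or $c$ is the restriction of a closed lightlike line; in the second alternative $c$ must be lightlike (which already rules out (i)), rounding the loop immediately gives $(c(t_2),c(t_1))\in J^{+}\subset \bar{J}^{+}$ for $t_1<t_2$, and strong causality fails at every point of $c$, so case (ii).2 holds at once. I can therefore assume causality everywhere along $c$ and reduce the theorem to the following: if strong causality fails at an interior $x=c(t_0)$, then either the maximality of $c$ is contradicted (so (i) or (ii).1 must apply) or $c$ sits inside an inextendible limit lightlike curve whose structure forces (ii).2, with (ii).3 following when the chronology pair of that limit curve is suitably placed.

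Next I would extract a sequence $\sigma_n$ of causal curves witnessing the failure, with endpoints $x_n,z_n\to x$ escaping a fixed neighborhood of $x$; after $h$-length reparametrization I apply Theorem \ref{main}(2). Causality at $x$ excludes any closed causal curve through $x$, so the $0<b<\infty$ case is ruled out and $b=+\infty$: I obtain inextendible limit curves $\gamma^x$ (future, starting at $x$) and $\gamma^z$ (past, ending at $x$) with the key property $(\tilde{x},\tilde{z})\in\bar{J}^{+}$ for every $\tilde{x}\in\gamma^x$ and $\tilde{z}\in\gamma^z$, and with strong causality violated at every point of $\gamma=\gamma^x\circ\gamma^z$. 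Were neither $\gamma^x$ nor $\gamma^z$ a lightlike ray, Theorem \ref{main}(2) would give $x\in I^{+}(x)$, contradicting chronology (hence causality) at $x$; so at least one of them is an achronal lightlike ray, in particular a lightlike geodesic.

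Fix a small $s>0$ and set $\tilde{x}=\gamma^x(s)$, $\tilde{z}=\gamma^z(-s)$; by the $\bar{J}^{+}$ relation I choose $p_k\to\tilde{x}$, $q_k\to\tilde{z}$ with $p_k\le q_k$. If the initial tangent of $\gamma^x$ at $x$ is not proportional to $c'(t_0)$, the causal concatenation $c\vert_{[t_0-\epsilon,t_0]}\ast\gamma^x\vert_{[0,s]}$ from $c(t_0-\epsilon)$ to $\tilde{x}$ has a genuine corner at $x$, and standard corner rounding (the same fact used inside the proof of Lemma \ref{mod}) yields the strict bound $d(c(t_0-\epsilon),\tilde{x})>l(c\vert_{[t_0-\epsilon,t_0]})+l(\gamma^x\vert_{[0,s]})\ge l(c\vert_{[t_0-\epsilon,t_0]})$. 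Combined with the symmetric estimate on the $\gamma^z$ side, the openness of $\ll$ and the lower semi-continuity of $d$ along $p_k,q_k$ produce
\[
d(c(t_0-\epsilon),c(t_0+\epsilon))>l(c\vert_{[t_0-\epsilon,t_0+\epsilon]}),
\]
contradicting the maximality of $c$. For $c$ timelike this tangent mismatch is automatic (a timelike $c'(t_0)$ versus a lightlike-ray tangent), which proves (i); for $c$ lightlike even a one-sided mismatch is enough, since the chain $c(t_0-\epsilon)\ll p_k\le q_k\le c(t_0+\epsilon)$ already forces $c(t_0-\epsilon)\ll c(t_0+\epsilon)$ and hence $d>0=l$, contradicting that $c$ is lightlike maximizing.

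The only remaining possibility when $c$ is lightlike is that both initial tangents of $\gamma^x$ and $\gamma^z$ at $x$ are proportional to the lightlike vector $c'(t_0)$. Since an achronal causal curve is a lightlike geodesic, geodesic uniqueness then forces $\gamma^x$ and $\gamma^z$ to coincide with $c$ locally, and maximal extension gives $c\subset\gamma$. The $\bar{J}^{+}$-structure of $\gamma$ transfers to $c$: strong causality fails at every point of $c$ and $(c(t_2),c(t_1))\in\bar{J}^{+}$ for $t_1<t_2$, which is case (ii).2. If in addition $\gamma$ is not a lightlike line, the proof of Theorem \ref{main}(2) places chronology-violating points on the approximating sequence $\gamma_k^x$ that accumulate at $\gamma^x(t^x)$ for the chronology-pair parameter $t^x$; whenever this parameter falls within the $c$-portion of $\gamma$, one obtains a point of $c((a,b))$ in the closure of the chronology violating set, realizing (ii).3 as well. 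The hard step will be precisely this tangent-match reduction — identifying the continuous causal limit curves $\gamma^x,\gamma^z$ with segments of the lightlike geodesic $c$ — which relies on the fact that achronal causal curves are lightlike geodesics together with the fine control of the $\bar{J}^{+}$ structure supplied by Theorem \ref{main}(2).
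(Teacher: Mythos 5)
Your overall architecture (Lemma \ref{mod} to dispose of the closed--causal--curve alternative, then Theorem \ref{main}(2) at a point where strong causality fails, then corner rounding against maximality) is the same as the paper's, and your treatment of the timelike case and of the fully tangent lightlike case is essentially correct. However, there is a genuine gap in the lightlike case, concentrated in your claim that ``even a one-sided mismatch is enough.''

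The chain $c(t_0-\epsilon)\ll p_k\le q_k\le c(t_0+\epsilon)$ is not justified: you have $q_k\to\tilde z=\gamma^z(-s)$ and $\tilde z\le c(t_0+\epsilon)$, but $J^{+}$ is not closed, so $q_k\le c(t_0+\epsilon)$ does not follow; you would need $\tilde z\ll c(t_0+\epsilon)$, i.e.\ a corner on the $\gamma^z$ side as well, to use openness of $I^{+}$. Consequently a one-sided tangent mismatch does \emph{not} contradict maximality, and this is not a repairable oversight but a reflection of the actual structure of the theorem: the configurations you declare impossible (one of $\gamma^x,\gamma^z$ fails to be a lightlike ray, or both are lightlike rays but are not mutually tangent at $x$) do occur, and in them the correct conclusion is precisely case (ii).3. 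The paper handles them by a different mechanism you never invoke: if, say, $y\in\gamma^x\cap I^{+}(x)$, then combining $x\ll y$ with $(y,x)\in\bar J^{+}$ and the openness of $I^{+}$ produces points $w$ arbitrarily close to $x$ with $w\ll w$, so $x$ lies in the closure of the chronology violating set; similarly when the two rays meet at a genuine corner at $x$. Your only route to (ii).3 (via ``$\gamma$ not a lightlike line'' in the tangent-matched case) does not cover these situations, and moreover the chronology-violating points $\gamma^x_k(t^x+b_k/2)$ you appeal to have parameters tending to infinity, so they need not accumulate on $c$; the accumulation argument has to be run at a fixed parameter such as $t^x$ itself.

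A secondary point: your reduction ``both initial tangents proportional to $c'(t_0)$ $\Rightarrow$ $\gamma^x,\gamma^z$ coincide locally with $c$ by geodesic uniqueness'' presupposes that both limit curves are lightlike geodesics, whereas your dichotomy from Theorem \ref{main}(2) only gives that \emph{at least one} of them is an achronal lightlike ray; a continuous causal curve that is not achronal need not have a well-defined tangent at $x$, and even if it does, uniqueness of geodesics does not apply to it. The case split must therefore be on whether each of $\gamma^x,\gamma^z$ is a lightlike ray and, if both are, on their mutual tangency at $x$, with the maximality contradiction reserved for the single configuration in which both are lightlike rays mutually tangent at $x$ but not tangent to $c$ there.
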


\begin{proof}
Assume that strong causality fails at some point  $x=c(t)$, $t \in
(a,b)$, otherwise (i) or (ii1) hold. I am going to show that case
(ii2) applies or $x$ belongs to the closure of the chronology
violating set.

Since strong causality fails at $x$ there is a neighborhood $U\ni
x$, and a sequence of causal curves $\gamma_n$ not entirely
contained in $U$ and respectively of endpoints $x_n$, $z_n$, $x_n
<z_n$, $x_n,z_n \to x$. Consider the limit  curve theorem, case (2).
If the two limit curves $\gamma^x$ and $\gamma^z$ are one the
reparameterization of the other, then there is a closed causal curve
starting and ending at $x$, which implies by lemma \ref{mod} that
$c$ is lightlike and can be prolonged  to a closed lightlike line,
in particular (ii2) is verified. If the two limit curves $\gamma^x$
and $\gamma^z$ are not one the reparameterization of each other
then $\gamma=\gamma^x\circ \gamma^z$ is a inextendible continuous
causal curve.

The curve $c$ is either lightlike or timelike. Let me consider for a
moment the former case. If there is some point $y \in \gamma^x  \cap
I^{+}(x)$, then since $(y,x) \in \bar{J}^{+}$ by the limit  curve
theorem, case (2), it is possible to construct a closed timelike
curve passing arbitrarily close to $x$. Thus if $\gamma^x$ is not a
lightlike ray then $x$ belongs to the closure of the chronology
violating set (that is point 3 of the theorem). The same conclusion
holds if $\gamma^z$ is not a lightlike ray. If both $\gamma^x$ and
$\gamma^z$ are lightlike rays but they are not tangent at $x$ then
rounding the corner at $x$ and using again the fact that for every
$q \in \gamma^x$ and $p\in \gamma^z$, $(q,p) \in \bar{J}^+$ it is
possible to construct a closed timelike curve passing arbitrary
close to $x$. Thus again point 3 of the theorem holds. There remains
the possibility in which $\gamma^x$ and $\gamma^z$ are both
lightlike rays tangent at $x$ so that $\gamma$ is a inextendible
lightlike geodesic (not necessarily a line). If $\gamma$ and $c$ are
tangent at $x$ then $c$ is a segment of $\gamma$ and recalling the
properties of $\gamma$ given by the limit curve theorem, case (2),
it follows that point (ii2) of this theorem holds.

There remain two possibilities: (a) $c$ lightlike and $\gamma$ is a
 inextendible lightlike geodesic not tangent to $c$ at $x$ and (b)
$c$ timelike and $\gamma$ is a  inextendible continuous causal curve
passing through $x$. Now the argument is the same for both cases.


Take $\bar{x}\in \gamma^x\backslash\{x\}$ and $\bar{z} \in
\gamma^z\backslash\{x\}$. By the limit curve theorem, case (2),
$(\bar{x},\bar{z})\in \bar{J}^{+}$. We are going to prove that
$d(c(a),\bar{x})>l(c\vert_{[a,t]})$ and in particular
$(c(a),\bar{x}) \in I^{+}$. The inequality has a different proof if
$c$ is lightlike or timelike. If $c$ is lightlike the causal curve
which connects $c(a)$ to $x$ along $c$ and then $x$ to $\bar{x}$
along $\gamma$ is not a geodesic (because they are not tangent at
$x$ and hence there is a corner between the two segments), thus it
is not maximizing, but it has length $l(c\vert_{[a,t]})$ and hence
$d(c(a),\bar{x})\ge l(c\vert_{[a,t]})+\delta_x$ for a certain
$\delta_x>0$. If $c$ is timelike the argument goes as with $c$
lightlike but one has to consider also the possibility that the
segment of $\gamma$ from $x$ to $\bar{x}$ could be a timelike
geodesic prolongation of $c$. However, in this case the same segment
would have length greater than zero and hence again
$d(c(a),\bar{x})\ge l(c\vert_{[a,t]})+\delta_x$. An analogous
argument gives $d(\bar{z},c(b))\ge l(c\vert_{[t,b]})+\delta_z$ and
$(\bar{z},c(b)) \in I^{+}$. The Lorentzian distance is lower
semi-continuous thus given $\epsilon
>0$, such that $d(c(a),\bar{x})-\epsilon>0$ and $d(\bar{z},c(b))-\epsilon>0$, there is a neighborhood
$U\ni \bar{x}$ such that $d(c(a),x')\ge d(c(a),\bar{x})-\epsilon>0$
for $x' \in U$. Analogously, there is a neighborhood $V \ni \bar{z}$
such that $d(z',c(b))\ge d(\bar{z},c(b))-\epsilon>0$ for $z' \in V$.
But $(\bar{x},\bar{z}) \in \bar{J}^{+}$ and thus there is a choice
of $x' \in U$, $z' \in V$ such that $(x',z') \in J^{+}$. The reverse
triangle inequality for $c(a)<x'<z'<c(b)$ gives
\[
d(c(a),c(b)) \ge d(c(a), x')+d(z',c(b))\ge
d(c(a),\bar{x})+d(\bar{z},c(b))-2\epsilon \ge
l(c)+\delta_x+\delta_z-2\epsilon.
\]
Since $\epsilon$ is arbitrary $d(c(a),c(b))\ge
l(c)+\delta_x+\delta_z$, and finally $d(c(a),c(b))> l(c)$.

The contradiction proves that none of the two cases (a) and (b)
applies.

\end{proof}

The previous theorem case (ii) in short states that outside the
closure of the chronology violating set,  the maximizing lightlike
segments propagate the property of strong causality. The next result
will plays role in the study of totally imprisoned curves
\cite{minguzzi07f}.

\begin{figure}[ht]
\begin{center}
 \includegraphics[width=8cm]{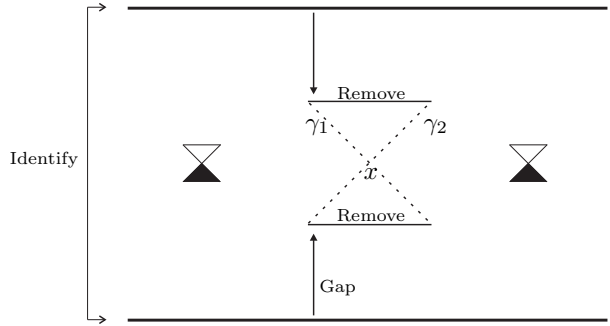}
\end{center}
\caption{A non-totally vicious non-chronological spacetime obtained
from Minkowski spacetime by removing two spacelike segments and by
identifying two lines as displayed. If the gap is large enough there
is only one component of the chronology violating set and no
lightlike line though $\gamma_1$ and $\gamma_2$ are inextendible
lightlike geodesics. If the gap is small enough there are two
components, the interxection of their boundaries being the event
$x$. The curves $\gamma_1$ and $\gamma_2$ are in this case lightlike
lines as expected from theorem \ref{miv}} \label{clnonvicious}
\end{figure}

%
%

\begin{theorem} \label{pku}
Let $\gamma_1: \mathbb{R} \to M$ and $\gamma_2: \mathbb{R} \to M$ be
two distinct lightlike lines which intersect at a event $x$, and
assume that neither $\gamma_1$ nor $\gamma_2$ intersect the closure
of the chronology violating set, then strong causality holds at
every point of $\gamma_1$ and $\gamma_2$.
\end{theorem}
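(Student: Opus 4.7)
My plan is to apply Theorem~\ref{str} to each of the lightlike lines $\gamma_1$ and $\gamma_2$, which are maximizing causal curves. Since by hypothesis neither meets the closure $\bar C$ of the chronology violating set, case (ii3) of Theorem~\ref{str} is excluded for both; the only alternatives are (ii1) ``strong causality holds everywhere on $\gamma_i$'' and (ii2) ``strong causality fails at every point of $\gamma_i$''. Thus if strong causality fails at a single point of $\gamma_1 \cup \gamma_2$, it already fails at every point of one of the $\gamma_i$, and in particular at the common point $x$. It therefore suffices to prove that strong causality must hold at $x$.

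Suppose for contradiction that it does not. Then there exist a neighborhood $U \ni x$ and a sequence of causal curves $\sigma_n$ with endpoints $p_n \le q_n$ both tending to $x$ but with $\sigma_n \not\subset U$. Apply Theorem~\ref{main}, case (2), to the sequence $\sigma_n$. After passing to a subsequence we are either in the case $0 < b < +\infty$, yielding a closed continuous causal curve $\eta$ at $x$, or in the case $b = +\infty$, yielding an inextendible continuous causal curve $\sigma = \sigma^x \circ \sigma^z$ through $x$ with $(\bar y,\bar z) \in \bar J^{+}$ for every $\bar y \in \sigma^x$, $\bar z \in \sigma^z$.

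The heart of the argument is to re-run, first with $c = \gamma_1$ and then with $c = \gamma_2$, the case analysis of the proof of Theorem~\ref{str}. Under the hypothesis $x \notin \bar C$, every sub-case that would put $x$ into $\bar C$ is forbidden (namely: one of $\sigma^x, \sigma^z$ not being a lightlike ray, or both being lightlike rays but not mutually tangent at $x$), and the distance inequality $d(c(a),c(b)) > l(c)$ then rules out the remaining possibility that the limit lightlike geodesic is not tangent to $c$ at $x$. In the closed-curve case, Lemma~\ref{mod} gives the analogous conclusion: the tangent of $\eta$ at $x$ must be proportional to $c'$ at $x$. Since both instances of the analysis are run on the \emph{same} limit object, obtained from the single sequence $\sigma_n$, this limit object is a lightlike geodesic through $x$ whose tangent at $x$ is proportional to both $\gamma_1'$ and $\gamma_2'$. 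But distinct lightlike lines through $x$ must have independent tangent directions at $x$, by uniqueness of the geodesic initial-value problem; the ensuing contradiction shows that strong causality must hold at $x$, which completes the proof.

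The subtle step is the observation that the two applications of the Theorem~\ref{str} analysis have to refer to one and the same limit object, so that the tangency conclusions for $\gamma_1$ and $\gamma_2$ combine into a genuine contradiction; everything else is a direct bookkeeping of the cases already catalogued in Theorem~\ref{str} and Theorem~\ref{main}.
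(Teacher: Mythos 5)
Your reduction to the single question of whether strong causality holds at the intersection point $x$ is the same as the paper's (Theorem~\ref{str} with case (ii3) excluded by the hypothesis that the lines avoid the closure of the chronology violating set), but the way you derive the contradiction when strong causality is assumed to fail at $x$ is genuinely different. The paper works only with the \emph{statement} of Theorem~\ref{str}: case (ii2) applied to $\gamma_2$ gives $(\gamma_2(\epsilon),\gamma_2(-\epsilon))\in\bar{J}^{+}$, and since the lines are distinct one can round the corner at $x$ of $\gamma_2^{+}\circ\gamma_1^{-}$ and of $\gamma_1^{+}\circ\gamma_2^{-}$ to obtain $\gamma_1(-\epsilon)\ll\gamma_2(\epsilon)$ and $\gamma_2(-\epsilon)\ll\gamma_1(\epsilon)$; openness of $I^{+}$ then chains these three relations into $\gamma_1(-\epsilon)\ll\gamma_1(\epsilon)$, contradicting achronality of the line $\gamma_1$. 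You instead re-open the \emph{proof} of Theorem~\ref{str} (and of Lemma~\ref{mod}): from one sequence witnessing the failure at $x$ you extract a single limit object via Theorem~\ref{main}(2), and running the case analysis once with $c=\gamma_1$ and once with $c=\gamma_2$ -- the sub-cases that would force $x$ into the closure of the chronology violating set being excluded by hypothesis, and the non-tangent sub-case being excluded by the maximality inequality $d(c(a),c(b))>l(c)$ -- you conclude that this one object is a lightlike geodesic whose tangent at $x$ is proportional to both $\gamma_1'(0)$ and $\gamma_2'(0)$, impossible for distinct lines. This is sound, and your remark that both runs must refer to the \emph{same} limit object is precisely the point that makes it work. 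The trade-off: your route leans on facts internal to the proofs of Theorem~\ref{str} and Lemma~\ref{mod} (the tangency of the limit object is not part of their statements, so strictly you are invoking a strengthened, proof-extracted version of them), whereas the paper's corner-rounding argument needs only the published conclusion (ii2); in exchange, your version isolates a reusable observation, namely that a strong causality violation at a point of a maximizing lightlike segment lying outside the closure of the chronology violating set singles out a unique lightlike direction there.
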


\begin{proof}
Without loss of generality assume $\gamma_1(0)=\gamma_2(0)=x$. By
theorem \ref{str} either strong causality is satisfied at $x$ and
hence it is satisfied at every point of $\gamma_1$ and $\gamma_2$ or
it is not satisfied at $x$ and hence it fails at every point of
$\gamma_1$ and $\gamma_2$ (case (ii3) of theorem \ref{str} can not
hold in this case otherwise $x$ would belong to the closure of the
chronology violating set). In the latter case, split $\gamma_1$ in
two parts $\gamma_1^+: [0,+\infty) \to M$, and $\gamma_1^-:
(-\infty,0] \to M$ and do the same with $\gamma_2$. By the same
theorem for every $\epsilon>0$, defined $x_1^+=\gamma_1(\epsilon)$,
$x_1^-=\gamma_1(-\epsilon)$, $x_2^+=\gamma_2(\epsilon)$,
$x_2^-=\gamma_2(-\epsilon)$, it is $(x_2^+, x_2^-)\in \bar{J}^+$.

Since the lines are distinct it is possible to round  the corner at
$x$ of $\gamma^+_2\circ \gamma_1^-$, and it is possible to do the
same for $\gamma^+_1\circ \gamma_2^-$. As a result, $(x_1^-,
x_2^+)\in I^+$ and $(x_2^-, x_1^+)\in I^+$, and using the fact that
$I^+$ is open it follows that $(x_1^-,x_1^+) \in I^{+}$, a
contradiction since $\gamma_1$ is a line.
\end{proof}

Clearly, the previous theorem holds even if the the lightlike lines
are replaced with lightlike maximizing segments which intersect at a
event which corresponds to parameter values which stay in the
interior of their respective domains.

It is well known since the work of Brandon Carter that the
chronology violating set is the union of open components, where two
points $p$ and $q$ belong to the same component whenever $p\ll q\ll
p$. In some cases the boundaries of the components may intersect
(see figure \ref{clnonvicious}).

\begin{theorem} \label{miv}
Let $B_1=\dot{C}_1$ and $B_2=\dot{C}_2$ where $C_1$ and $C_2$ are
distinct components of the chronology violating set of $(M,g)$.
Through every point  of $B_1\cap B_2$ (which may be empty) there
passes a lightlike line entirely contained in $B_1\cup B_2$. In
particular a spacetime without lightlike lines has a chronology
violating set with components having disjoint closures.
\end{theorem}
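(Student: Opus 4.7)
The plan is to construct a lightlike line through $x$ as a limit curve of closed timelike loops based at points of $C_1$ converging to $x$, and then to argue by achronality that this line must avoid $C_1$ and hence lie in $B_1 \subseteq B_1 \cup B_2$. First I would note that $x \notin C$: since each $C_i$ is an open connected component of $C$, if $x$ were in some $C_j$ then the open set $C_j \ni x$ would meet $C_1$ (which accumulates at $x$), forcing $j=1$ and contradicting $x \in B_1 = \bar{C_1}\setminus C_1$. Choose $p_n \in C_1$ with $p_n \to x$; since $p_n \ll p_n$, for each $n$ there is a closed timelike curve $\alpha_n$ through $p_n$, parametrized by $h$-length on $[0,b_n]$ with $\alpha_n(0)=\alpha_n(b_n)=p_n$.

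Apply Theorem \ref{main} case (2) with $x_n = z_n = p_n \to x$, so $x = z$. Lemma \ref{mas} furnishes an arbitrarily small globally hyperbolic neighborhood of $x$ free of closed causal curves, so the loops $\alpha_n$ cannot collapse to $x$; we therefore land in one of the sub-cases $0 < b < +\infty$ or $b = +\infty$. In the first sub-case, the limit curve $\gamma^x$ is a closed continuous causal curve from $x$ to $x$, and Theorem \ref{xxa} gives that either $\gamma^x$ extends (via infinite rounds) to a closed lightlike line $\eta$ or lies entirely in $C$; since $x \in \gamma^x$ while $x \notin C$, the first alternative must hold. In the second sub-case $b=+\infty$, the theorem supplies an inextendible limit curve $\gamma=\gamma^x\circ\gamma^z$ through $x$ with the pair property $(\tilde x,\tilde z)\in\bar J^+$ for all $\tilde x,\tilde z \in \gamma$. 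By the same case (2), if neither $\gamma^x$ nor $\gamma^z$ is a lightlike ray then $z=x \in I^+(x)$, forcing $x \in C$, a contradiction; hence at least one of the two is a lightlike ray. A careful case analysis (tangent vs.\ non-tangent rays) combined with the openness of $\ll$ and the pair property rules out the possibility that $\gamma$ fails to be achronal: a non-tangent pair of lightlike rays at $x$ can be corner-rounded to give $\gamma^z(-\epsilon)\ll\gamma^x(\epsilon)$, which chained with the $\bar J^+$ approximations manufactures chronological loops forcing $x \in C$.

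In either sub-case the resulting lightlike line, call it $\eta$, is a limit of curves in $C_1$, hence $\eta \subset \bar C_1 = C_1 \cup B_1$. I then exclude $\eta$ from meeting $C_1$ by achronality: if $y \in \eta \cap C_1$ then $y \ll y$, and openness of $\ll$ provides a neighborhood $U$ of $y$ with $U \subset I^+(y) \cap I^-(y)$; continuity of $\eta$ gives $\eta(t) \in U$ for $t$ in a small interval around $t_y$, whence both $y \ll \eta(t)$ and $\eta(t) \ll y$ hold for $t \neq t_y$, violating achronality of $\eta$. Consequently $\eta \subset B_1 \subseteq B_1 \cup B_2$ and passes through $x$, as required. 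The ``in particular'' clause then follows by contraposition: the absence of lightlike lines forces $B_1 \cap B_2 = \emptyset$ for every pair of distinct components, and combined with $C_1 \cap C_2 = \emptyset$ and $C_i \cap B_j = \emptyset$ for $i \neq j$ (both established at the outset, the latter because an open component cannot contain boundary points of another), this yields $\bar C_1 \cap \bar C_2 = \emptyset$.

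The main obstacle is the analysis of the sub-case $b = +\infty$: the conclusion of Theorem \ref{main} case (2) leaves open the possibility that $\gamma$ fails to be a lightlike line, and in contrast to the maximizing-segment setting of Theorem \ref{str} one must work harder to leverage $x \notin C$---delicately chaining the $\bar J^+$ relations between $x$ and arbitrary points of $\gamma$ with openness of $\ll$ and rounding of any corner at $x$, so as to manufacture a closed timelike curve passing through $x$ itself rather than merely arbitrarily close to $x$.
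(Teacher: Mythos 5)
Your treatment of the sub-case $0<b<+\infty$ and the final containment argument ($\eta\subset\bar{C}_1$, achronality excludes $\eta\cap C_1\neq\emptyset$, hence $\eta\subset B_1$) are sound and agree with the paper. The gap is exactly where you flagged it: the sub-case $b=+\infty$ cannot be closed with loops drawn from $C_1$ alone. Your proposed contradiction machine is to chain the relations $(\tilde x,\tilde z)\in\bar J^{+}$ with openness of $I^{+}$ so as to manufacture $x\ll x$. But every point of $\gamma^x$ and $\gamma^z$ is a limit of points of $C_1$, and any two points of $C_1$ are already chronologically related in both directions; so approximating a chronology relation between points of $\gamma^z$ and $\gamma^x$ by nearby points of $C_1$ produces $p\ll q\ll p$ with $p,q\in C_1$ — a true statement, not a contradiction. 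In particular, if both $\gamma^x$ and $\gamma^z$ are lightlike rays tangent at $x$, the concatenation $\gamma$ is an inextendible lightlike \emph{geodesic} but nothing forces it to be achronal, and no amount of corner-rounding or $\bar J^{+}$-chaining yields $x\ll x$ (it only yields $u_n\ll x$ with $u_n\to x$, which is harmless).

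That this gap is unfixable within your strategy is shown by the paper's own figure~\ref{clnonvicious}: with a large gap there is a single component $C_1$ with nonempty boundary and \emph{no lightlike lines at all}, while the natural limit curves are inextendible lightlike geodesics that fail to be lines. Your argument uses the second component only to observe $x\notin C$, which already follows from $x\in\dot{C}_1$; so, if valid, it would prove that a lightlike line contained in $\dot{C}_1$ passes through every point of $\dot{C}_1$ of any single component — contradicting that example. The paper's proof avoids this by running the limit curve theorem on \emph{two} sequences of loops, one in each component, and considering the crossed concatenations $\gamma^{(1)x}\circ\gamma^{(2)z}$ and $\gamma^{(2)x}\circ\gamma^{(1)z}$: if neither is achronal, openness of $I^{+}$ produces $p\in C_1$, $q\in C_2$ with $p\ll q\ll p$, whence $C_1=C_2$ — a contradiction that genuinely requires the distinctness of the two components. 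This also explains why the theorem only asserts containment in $B_1\cup B_2$ rather than in a single boundary, as your argument would (falsely) give.
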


\begin{proof}
Assume that $x \in B_1\cap B_2$, and let $\gamma^{(1)}_n$ be a
sequence of closed timelike curves of starting point and ending
point $x^{(1)}_n$, $x^{(1)}_n \in C_1$ and $x^{(1)}_n \to x$ (thus
$\gamma^{(1)}_n$ are contained in $C_1$). Analogously, let
$\gamma^{(2)}_n$ be a sequence of closed timelike curves of starting
point and ending point $x^{(2)}_n$, $x^{(2)}_n \in C_2$ and
$x^{(2)}_n \to x$ (thus $\gamma^{(2)}_n$ are contained in $C_2$).
Apply the limit curve theorem case (2) to $\gamma^{(1)}_n$ with
$z=x$. If $b<+\infty$ then $\gamma^{(1)x}$ is a closed causal curve,
it must be achronal since if $p,q \in \gamma^{(1)x}$, $p \ll q$,
then $x\le p\ll q \le x$ and hence $x\ll x$ which implies $x\in C_1$
a contradiction. Thus $\gamma^{(1)x}$ is a geodesic with no
discontinuity in the tangent vectors at $x$. It can be extended to a
lightlike line $\gamma$ by making infinite  rounds over
$\gamma^{(1)x}$. Moreover, note that $\gamma^{(1)x}$ can't have any
point $p$ in $C_1$ otherwise $x\le p \ll p \le x$ and again $x \ll
x$, $x \in C_1$, a contradiction, thus $\gamma \subset \dot{C}_1$.

It remains to consider the case in which the limit curve theorem
case (2) applies to $\gamma^{(1)}_n$ and $\gamma^{(2)}_n$ with
$b=+\infty$. With the notations of the limit curve theorem, there
are  future inextendible continuous causal curves
$\gamma^{(1)x}\subset \bar{C}_1$, $\gamma^{(2)x}\subset \bar{C}_2$
and past inextendible continuous causal curves $\gamma^{(1)z}\subset
\bar{C}_1$, $\gamma^{(2)z}\subset \bar{C}_2$. Assume that neither
$\gamma^{(21)}=\gamma^{(1)x}\circ\gamma^{(2)z}$ nor
$\gamma^{(12)}=\gamma^{(2)x}\circ\gamma^{(1)z}$ are lightlike lines.
There are points $\bar{z}^{(2)} \in \gamma^{(2)z}$, $\bar{x}^{(1)}
\in \gamma^{(1)x}$, $ \bar{z}^{(2)} \ll \bar{x}^{(1)}$, and
$\bar{z}^{(1)} \in \gamma^{(1)z}$, $\bar{x}^{(2)} \in
\gamma^{(2)x}$, $ \bar{z}^{(1)} \ll \bar{x}^{(2)}$. Since $I^{+}$ is
open there are open sets $V^{(2)} \ni \bar{z}^{(2)}$, $U^{(1)} \ni
\bar{x}^{(1)}$, $V^{(1)} \ni \bar{z}^{(1)}$ and $U^{(2)} \ni
\bar{x}^{(2)}$ such that $V^{(2)}\times U^{(1)} \subset I^{+}$,
$V^{(1)}\times U^{(2)} \subset I^{+}$. But $(U^{(1)}\cap C_1) \times
(V^{(1)}\cap C_1) \subset I^{+}$ because any two points of $C_1$ are
chronologically related. Analogously, $(U^{(2)}\cap C_2) \times
(V^{(2)}\cap C_2) \subset I^{+}$. These relations prove that it is
possible to find two points $p\in C_1$ and $q \in C_2$ such that
$p\ll q\ll p$ and thus the two components would coincide. The
contradiction proves that $\gamma^{(21)}$ or $\gamma^{(12)}$ is a
lightlike line. Assume it is the former the latter case being
analogous. The curve $\gamma^{(1)x}$ can't intersect $C_1$ otherwise
taken any two points of $\gamma^{(1)x}$ in $C_1$ they would be
chronologically related in contradiction with the achronality of
$\gamma^{(21)}$. Thus $\gamma^{(1)x} \subset \dot{C}_1$ and
analogously, $\gamma^{(2)z} \subset \dot{C}_2$, in conclusion
$\gamma^{(21)} \subset B_1 \cup B_2$.
\end{proof}

\section{Conclusions}

In the first sections of this work the topic of limit curve theorems
in Lorentzian geometry has been reviewed. Since the aim was the
formulation of a limit curve theorem which holds even in the case of
curves with endpoints, the definition of uniform convergence on
compact subsets has been generalized to the case in which the
converging curves do not have the same domain of definition.

The upper semi-continuity of the length functional has been given a
new proof which is suitable to this generalized circumstance
(theorem \ref{ups}). It avoids any mention to the property of strong
causality while it replaces $C^0$ convergence with uniform converge.

The notion of limit maximizing sequence has been generalized to the
case of curves without endpoints, as well as the theorem that the
uniform limit of a limit maximizing sequence is a maximizing curve
(theorem \ref{maxs}).

The central result of the work, theorem \ref{main}, is separated
into three parts.

Point (1) gives the generalization of Beem et al. limit curve lemma
to the case of curves with endpoints,  with a few more observations
which are helpful in order to establish whether the limit curve is
inextendible or not. Thanks to the fact that it holds for  curves
with endpoints, it can be used to construct limit maximizing
sequences where one or both endpoints go to infinity. Thus it is
useful in order to establish the existence of lines or rays passing
through a point.

Point (2) focuses on the case in which there have been given, since
the beginning, two limit events $x$ and $z$ and not only one as in
(1). This case is very useful in applications, especially when it
comes to prove the connectedness of spacetime through maximizing
geodesics or similar results. It also becomes specially interesting
when $x$ and $z$  coincide. In this case it provides information on
the existence of lightlike lines given suitable causality violations
on the spacetime. I will consider these aspects in a related work
. If the sequence of curves is limit maximizing, point (2) gives a
bound to the sum of the lengths of the two limit curves,
generalizing a key observation by Newman \cite{newman90}.

Point (3) focuses on a very general case in which the limit points
given in the beginning can be infinite but numerable. In this case
it is proved that through each one of them there passes a uniform
limit curve and in case of a limit maximizing sequence, an upper
bound to the sum of their lengths has been given.

Some consequences of the limit curve theorem have been considered in
the last sections. It has been proved for instance that in
chronological spacetimes the maximizing lightlike segments defined
over open intervals are such that strong causality either is
everywhere violated or everywhere verified over the curve (theorem
\ref{str}). A consequence is that if in a chronological spacetime
two distinct lightlike lines intersect each other then strong
causality holds at the points of their union (theorem \ref{pku}).

Finally, the last result has been the proof that any spacetime
without lightlike lines has a chronology violating set such that the
closures of its components are disjoint.


\section*{Acknowledgments}

Useful conversations with M. S\'anchez on the upper semi-continuity
of the length functional are acknowledged. This work has been
partially supported by GNFM of INDAM and by MIUR under project PRIN
2005 from Universit\`a di Camerino.


\end{document}